\newcommand{\real}{{\mathbb{R}}}
\newcommand{\reals}{\real}
\newtheorem{theorem}{Theorem}[section]
\newtheorem{proposition}[theorem]{Proposition}
\newtheorem{lemma}[theorem]{Lemma}
\title{Distributed consensus with mixed time/communication bandwidth
      performance metrics}
\author{Federico Rossi \thanks{Federico Rossi and Marco Pavone are  with the Department of Aeronautics and Astronautics, Stanford University, Stanford, CA, 94305, \{{\tt\small frossi2}, {\tt\small pavone}\}{\tt \small @stanford.edu}} \and Marco Pavone}
\begin{document}
\maketitle
\thispagestyle{empty}
\pagestyle{empty}
%%%%%%%%%%%%%%%%%%%%%%%%%%%%%%%%%%%%%%%%%%%%%%%%%%%%%%%%%%%%%%%%%%%%%%%%%%%%%%%%

\begin{abstract}
In this paper we study the inherent trade-off between time and communication complexity for the distributed consensus problem. In our model, communication complexity is measured as the maximum data throughput (in bits per second) sent through the network at a given instant. Such a notion of communication complexity, referred to as bandwidth complexity, is related to the frequency bandwidth a designer should collectively allocate to the agents if they were to communicate via a wireless channel, which represents an important constraint for dense robotic networks. We prove  a lower bound on the bandwidth complexity of the consensus problem and provide a consensus algorithm that is bandwidth-optimal for a wide class of consensus functions. We then propose a distributed algorithm that can trade communication complexity versus time complexity as a function of a tunable parameter, which can be adjusted by a system designer as a function of the properties of the wireless communication channel.  We rigorously characterize the tunable algorithm's worst-case bandwidth complexity and show that it compares favorably with the bandwidth complexity of well-known consensus algorithm.
\end{abstract}

\section{Introduction}
Distributed decision-making in robotic networks is a ubiquitous problem, with applications as diverse as state estimation \cite{ROS:07}, formation control \cite{WR-RWB-EMA:07}, tracking \cite{YH-JH-LG:06} and cooperative task allocation \cite{MdW-BC:09}. Distributed decision-making problems such as leader election, majority voting, distributed  hypothesis testing, and some distributed optimization problems can all be abstracted as instances of the \emph{consensus} problem, where nodes in a robotic network have to agree on some common value \cite{JNT-MA:85,NL:96}. 
For this reason, the consensus problem has gathered significant interest in the Control Systems community in recent years, following the seminal works in \cite{JNT:84-extra, ROS-RMM:03c,AJ-JL-ASM:02}.

Research in the Control Systems community has mainly focused on the \emph{average} consensus problem and, in particular, on local averaging algorithms, where nodes repeatedly average their state with their neighbors': fundamental limitations on the \emph{time} complexity of local averaging algorithms are now known \cite{AO-JNT:07}.
Conversely, research in the Computer Science community has mainly focused on lower bounds on the general consensus problem in presence of node failures (both non-malicious and byzantine): several lower bounds on the time and communication complexity of the general consensus problem are now known. Significant attention has also been devoted to the  \emph{leader election} problem, a specific consensus problem where agents in a network select a single agent as their leader. Fundamental limitations of the leader election problem in terms of time and communication complexity and matching optimal algorithms are now well-understood \cite{NL:96}.
However, complexity results in the Control Systems and in the Computer Science communities strongly rely on the assumption that all messages can be delivered within a \emph{finite} amount of time that does \emph{not} depend on message size \cite{NL:96}. This assumption has significant effects on the \emph{frequency bandwidth} collectively required by the agents: however, despite the large interest in the consensus problem and its applications, the problem of bandwidth use has seen very limited investigation in both the Computer Science and the Control Systems communities. 

\emph{Motivation}:
In this paper we argue that bandwidth use can play a significant role in the real-world performance of consensus algorithms and significantly limit their scalability as the number of agents increases.

In order to appreciate the importance of bandwidth use in modern robotic networks, consider the following scenario. A network of $n=100$ robotic agents, each with six mechanical degrees of freedom is tasked with averaging their state (i.e. performing average consensus), e.g. to control their formation \cite{WR-RWB-EMA:07} or to filter noisy observations of a common target \cite{ROS:07}. Each agent's state or observation can be represented by twelve floating point numbers:  the size of each agent's initial condition is $b=768$ bits. Each message should be delivered in $10$ ms at most, in order to guarantee acceptable time performance. The complexity results in Section \ref{sec:algs} allow us to show that a bandwidth-optimal algorithm such as GHS with convergecast achieves consensus with a bandwidth complexity of 143 kbps (and converges in approximately 6.7 s). The popular average-based consensus algorithm requires 7750 kbps and has a significantly slower convergence rate, achieving convergence in approximately 100 s. Finally, the time-optimal flooding algorithm only requires 1 s to converge, but it requires a bandwidth of 775 Mbps.
As a comparison, a bandwidth-optimized protocol such as 802.11n WiFi requires use of the entire 2.4 GHz ISM band to transmit 288Mbps over very short distances: thus, selecting an inappropriate consensus algorithm can have a dramatic impact on the real-world performance, the scalability and potentially even the stability of a cyber-physical network. 

\emph{Statement of contributions}:
The contribution of this paper is threefold. First, we propose a rigorous metric for the bandwidth complexity of decentralized algorithms with omnidirectional (broadcast) communication channels and discuss its relevance to modern media access control (MAC) mechanisms. Second, we prove a lower bound on the bandwidth complexity of the generalized consensus problem. The bound is tight for a wide class of consensus functions that includes many distributed consensus problems such as mean and weighed mean, leader election, selected distributed optimization problems and majority voting. Finally, we show that the hybrid algorithm proposed by the authors in \cite{FR-MP:13} achieves intermediate bandwidth performance between the time-optimal flooding algorithm and the bandwidth-optimal GHS algorithm with convergecast. This result, combined with our previous findings, shows that our hybrid algorithm can be tuned to satisfy mixed time-bandwidth metrics as well as mixed time-energy metrics, trading time complexity (and, to an extent, robustness) for spectrum utilization and energy consumption. A graphical depiction of our lower bounds on time, byte and bandwidth complexity of the flooding algorithm, the GHS algorithm with convergecast and our hybrid algorithm is shown in Figure \ref{fig:bounds}.
\begin{figure*}[h!tp]
\centering
\begin{subfigure}[h]{0.3\textwidth}
\includegraphics[width=\textwidth]{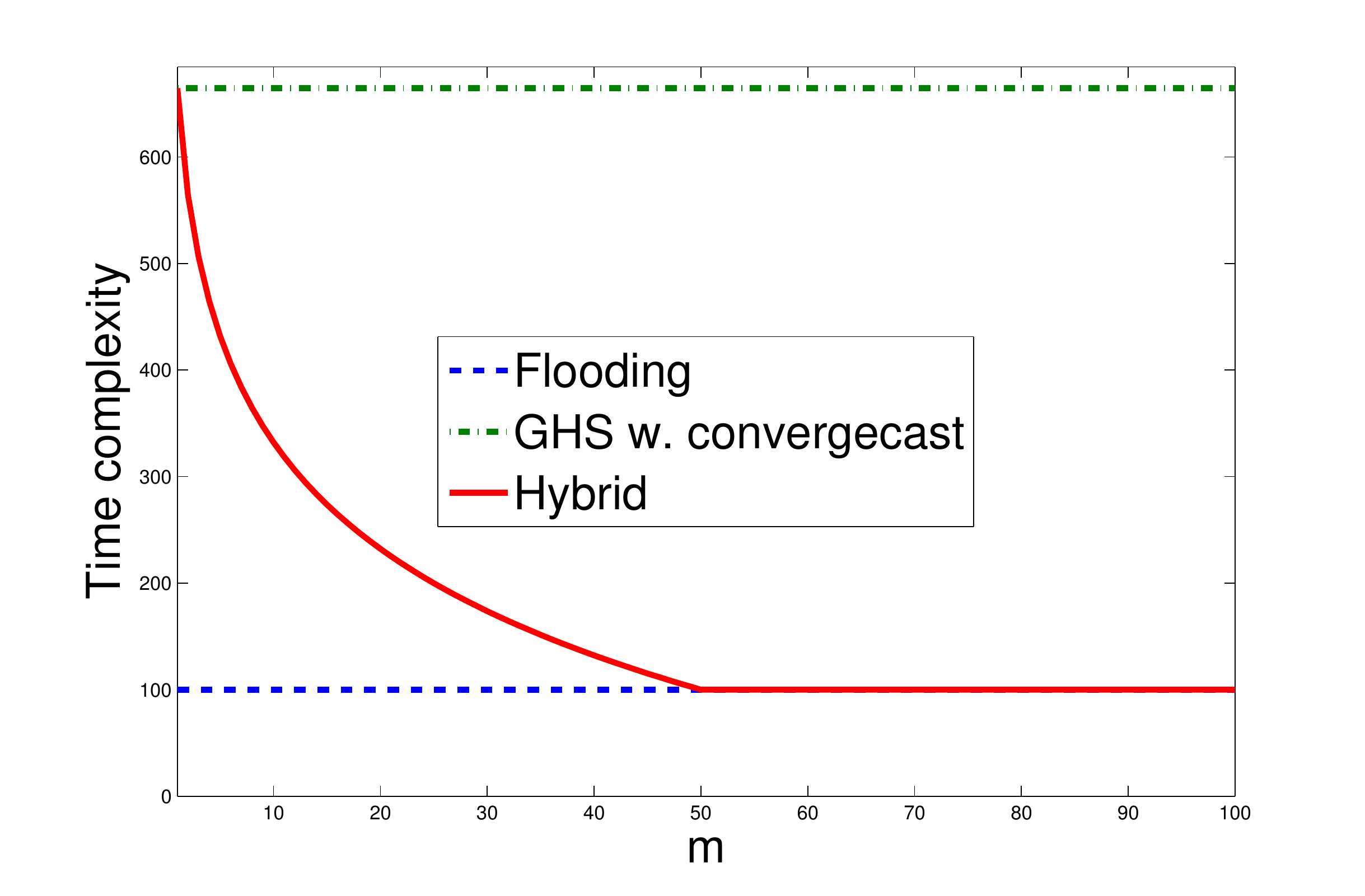}
\caption{Bounds on time complexity}
\end{subfigure}
\begin{subfigure}[h]{0.3\textwidth}
\includegraphics[width=\textwidth]{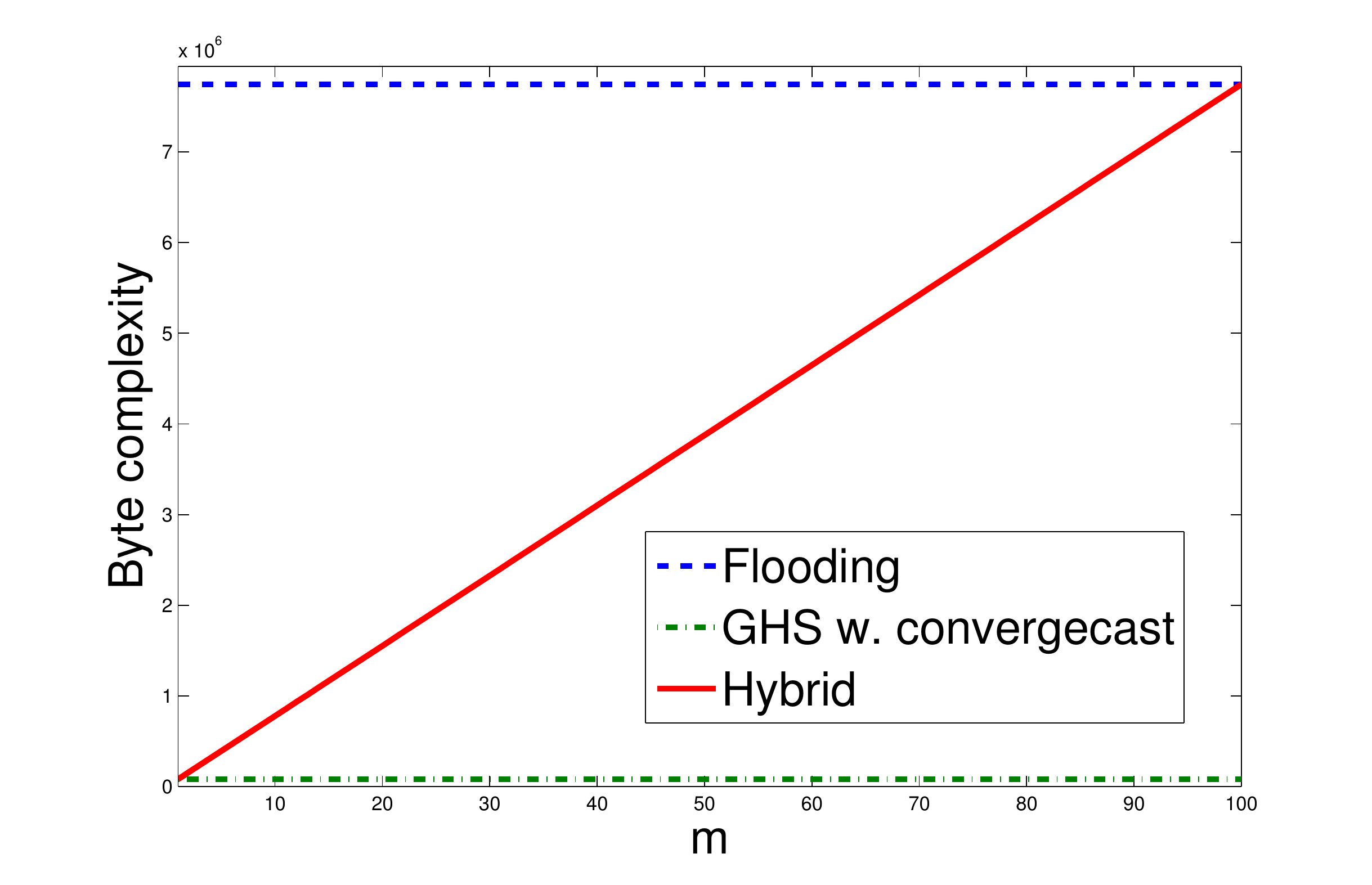}
\caption{Bounds on byte complexity}
\end{subfigure}
\begin{subfigure}[h]{0.3\textwidth}
\includegraphics[width=\textwidth]{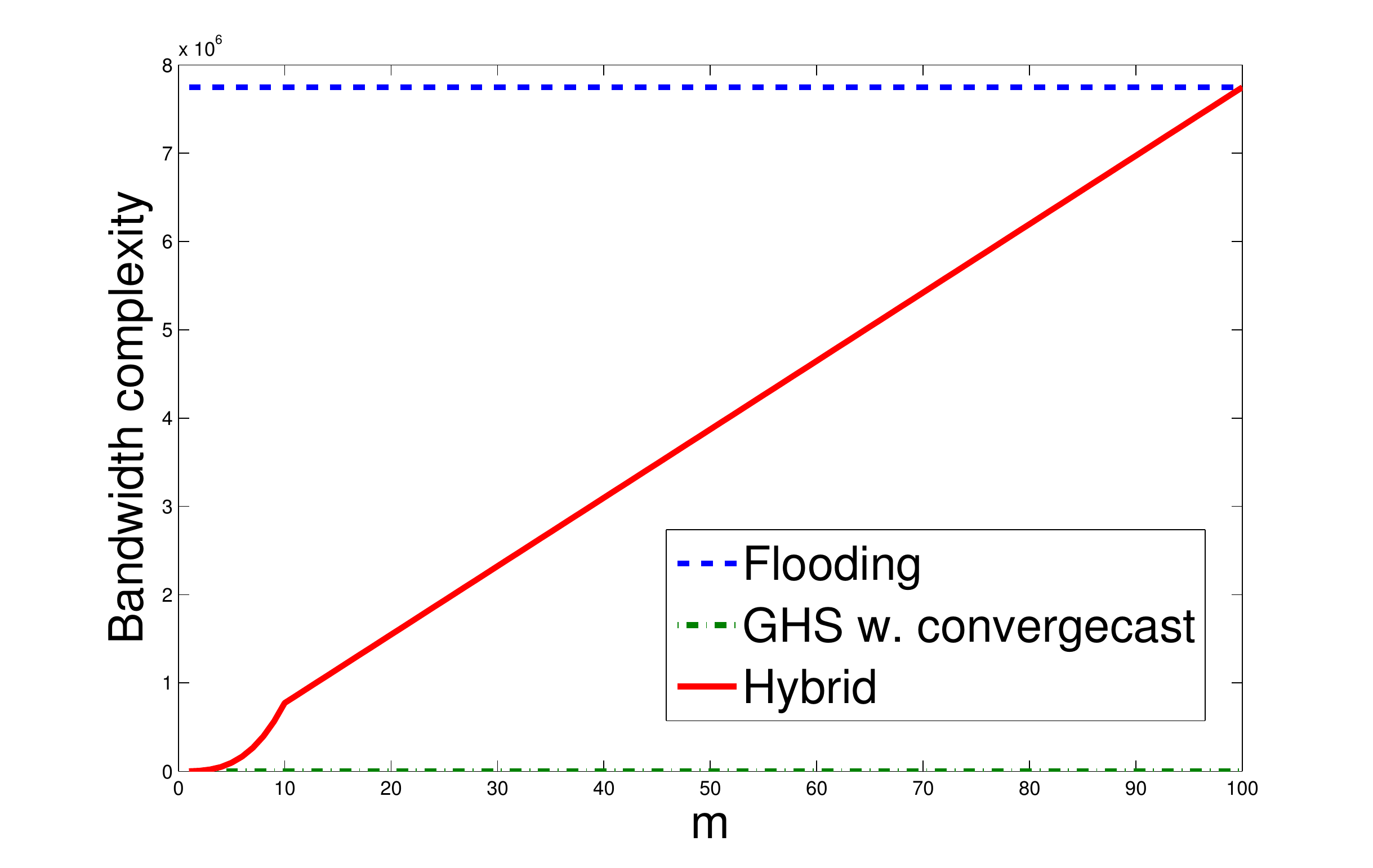}
\caption{Bounds on bandwidth complexity}
\end{subfigure}
\caption{Bounds on time, byte and bandwidth complexity of the flooding algorithm (in blue, dashed), GHS algorithm with convergecast (in green, dash-dotted) and of our hybrid algorithm (in red, solid) as a function of the tuning parameter $m$ for a network of size $n=100$ with state size $b=768$ bits.
Our hybrid algorithm recovers the time-optimal performance of the flooding algorithm as $m\to n$. Conversely, the hybrid algorithm recovers the byte and bandwidth performance of the byte-optimal and bandwidth-optimal GHS algorithm with convergecast, presented in Section \ref{sec:algs}, as $m\to 1$.}
\label{fig:bounds}
\end{figure*}

\emph{Organization}: This paper is organized as follows.
In Section \ref{sec:setup}, after formalizing our agent model and network model, we give a formal definition of bandwidth complexity and justify its relevance to modern multi-agent media access control mechanisms. We then provide a rigorous definition of the generalized consensus problem.
In Section \ref{sec:lbs} we prove a lower bound on the bandwidth complexity of the generalized consensus problem.
In Section \ref{sec:algs} we prove tightness of the lower bound. We also study the bandwidth complexity of common consensus algorithms (including the flooding algorithm, the GHS algorithm with convergecast and the average-based consensus-algorithm). A variation of the GHS algorithm with convergecast achieves the lower bound presented in the previous section.
In Section \ref{sec:tunable}, we study the bandwidth complexity of the hybrid algorithm introduced by the authors in \cite{FR-MP:13}. We  show that the  bandwidth complexity of the algorithm smoothly transitions from bandwidth-optimal performance to the performance of the time-optimal flooding algorithm.
Finally, in Section \ref{sec:conclusions}, we draw our conclusions and discuss directions for future research.

\section{Problem setup}
\label{sec:setup}
In this section we introduce the model and the complexity metrics used in this work. We also describe naming conventions that will be used in the rest of this paper. A preliminary version of the model has appeared in \cite{FR-MP:13}.
\subsection{Agent model}
An agent in a robotic network is modeled as an input/output (I/O) automaton, i.e., a labeled state transition system able to send messages, react to received messages and perform arbitrary internal transitions based on the current state and  messages received. 
A precise definition of I/O automaton is provided in \cite[pp. 200-204]{NL:96} and is omitted here in the interest of brevity. All agents in a robotic network are identical except for a unique identifier (UID - for example, an integer).  The time evolution of each agent  is characterized by two key assumptions:
\begin{itemize}
\item {\bf Fairness assumption}: the order in which transitions happen and messages are delivered is not fixed a priori. However, any enabled transition will \emph{eventually} happen and any sent message will \emph{eventually} be delivered.
\item {\bf Non-blocking assumption}: every transition is activated within $l$ time units of being enabled and every message is delivered within $d$ time units of being dispatched.
\end{itemize}
Essentially, the fairness assumption states that each  agent will eventually have an opportunity to perform transitions, while the non-blocking assumption gives timing guarantees (but no synchronization). We refer the interested reader to \cite[pages 212-215]{NL:96} for a detailed discussion of these assumptions. We argue here that these are \emph{minimal} assumptions for most  real-world robotic networks.

\subsection{Network model}
A \emph{robotic network} comprising $n$ agents is modeled as a \emph{connected}, \emph{undirected} graph $G = (V,E)$, where $V = \{1,\ldots, n\}$ is the node set, and $E\subset V\times V$, the edge set,  is a set of \emph{unordered} node pairs modeling the availability of a communication channel. Two nodes $i$ and $j$ are neighbors if $(i, j)\in E$.  The neighborhood set of node $i\in V$ is the set of nodes $j\in V$ that are neighbors of node $i$. Henceforth, we will refer to nodes and agents interchangeably. Our model is \emph{asynchronous}, i.e., computation steps within each node and communication are, in general, asynchronous.
In this paper we focus on \emph{static networks}, i.e., robotic networks where the edge set does not change during the execution of an algorithm. 
However, we do remark that (i) our lower bounds also apply to time-varying networks (although, of course, they may not be tight) and (ii) the hybrid algorithm described in section \ref{sec:tunable} has \emph{limited, tunable} resistance to network disruption, requiring a tunable amount of time to reconfigure after network disruptions. We refer the interested reader to \cite{FR-MP:13} for a thorough discussion of the recovery mechanism and of its time complexity.
\subsection{Model of communication}
Nodes communicate with their neighbors according to a \emph{local broadcast} communication scheme. A node can send a message to all neighbors simultaneously: the cost of a message (in terms of energy consumption and bandwidth use)  is independent of the number of receivers. 
We remark that local broadcast algorithms can emulate one-to-one communication: it is sufficient to append the intended recipient's UID to each broadcast and instruct non-recipients to ignore the message. 

The local broadcast communication scheme is representative of robotic networks where agents are equipped with \emph{omnidirectional} antennas: this arrangement is typical of most current airborne and ground-based robotic networks, where steerable antennas are unadvisable due to the agents' mobility. %On the other hand, autonomous underwater vehicles typically adopt directional communication schemes to mitigate interference due to multipath propagation: thus, our model is not suitable for underwater cyber-physical networks.

Message transmission requires a finite, nonvanishing amount of time; the non-blocking assumption ensures that every message is delivered within time $d$.
We consider $d$ to be constant: that is, all messages are delivered within a maximum time that does \emph{not} depend on message size or type. This assumption is widely used in the Computer Science community \cite{NL:96} and is typical of TCP-like communication protocols.
Thus, the parameter $d$ represents a \emph{desired performance level}.
\emph{Collisions} occur when (part of) two or more messages are transmitted on the same frequency at the same time: when a collision occurs, all messages involved in the collision are not delivered. When analyzing bandwidth complexity, we assume that messages are sent at a \emph{constant rate} throughout a window of length $d$: it is easy to see that ``bursty'' transmission would only decrease bandwidth performance.

We remark that, in presence of a TCP communication protocol, collisions do not cause messages to be permanently lost: nodes can sense the collision and resend the information at a later time. However, frequent collisions can have a major impact on the time required to deliver a message and, as a result, on the execution time of an algorithm.

We also remark that, in \emph{directional} communication schemes, communication channels can be \emph{spatially} separated: that is, two messages may be transmitted on the same frequency at the same time with minor interference if the respective directional communication channels do not physically overlap. In our omnidirectional communication scheme, on the other hand, collisions occur whenever two nodes within range of one another send messages on the same frequency at the same time; in addition, even if two nodes are not within range of one another, collisions may occur if they are trying to contact a third node in range of both (this problem is known as the ``hidden node problem'' in the telecommunication community). In this paper we assume no restrictions to the network topology the nodes can assume: we remark that there exist both dense and sparse network topologies where every node's communications may spatially interfere with all other nodes'.

\subsection{Bandwidth complexity measure}
We define bandwidth complexity  as the infimum worst-case (over graph topologies, initial values, fair executions and execution time) overall number of bytes transmitted at the same instant by all agents in the network.

Let $\mathcal G$ be a set of graphs with node set $V=\{1,\ldots, n\}$. For a given graph $G\in \mathcal G$, let $\mathcal F(a, x, G)$ be the set of \emph{fair executions}  for an algorithm  $a\in \mathcal A$ and a set of initial conditions $x \in \mathcal X^{n}$  (a fair execution is an execution of an algorithm that satisfies the fairness and non-blocking assumptions stated above).

Rigorously, the bandwidth complexity for a given consensus function $f$ with respect to the class of graphs $\mathcal G$ is
\begin{flalign*}
&\textrm{FC}(f, G):=&\\
& \inf_{a\in \mathcal A} \, \sup_{G\in \mathcal G}\, \sup_{x \in \mathcal X^{|G|}} \, \sup_{\alpha \in \mathcal F(a, x,G)} \, \sup_{t\in [0, T(a, x, \alpha, G)]} F(a, x, \alpha, G,t)
\end{flalign*}
where $F(a, x, \alpha, G,t)$ is the bandwidth (measured by the size of all messages transmitted at time $t$ divided by the maximum transmission time $d$) at time $t$ of execution $\alpha$ of algorithm $a$ with initial conditions $x$ on a graph $G$.
While very simple, the bandwidth complexity measure is a reliable proxy for many wireless communication protocols and media access control (MAC) mechanisms. Its interpretation varies depending on the specific MAC mechanism employed:
\begin{itemize}
\item If Frequency Division Multiple Access (FDMA) is employed, the frequency bandwidth required by a single message is proportional to the message size divided by the maximum transmission time $d$; the overall frequency bandwidth required by the network is proportional to the maximum \emph{overall} size of messages being transmitted at a given time divided by the maximum transmission time $d$, since every message must be broadcast on a different frequency slot;
\item If a Time Division Multiple Access (TDMA) media access control mechanism is employed, each agent is allocated a time slot in a round robin fashion so that only one agent can transmit during a given time slot. The sum of the durations of all time slots must be smaller than $d$ to guarantee that all sent messages be delivered within $d$ time units: thus, in order to guarantee a timely delivery, the frequency bandwidth required must be proportional to the maximum overall size of all messages sent at any instant of time, divided by $d$.
\item If Code Division Multiple Access (CDMA) is employed, multiple messages are relayed on the same, wide frequency spectrum at the same time; a \emph{spread spectrum} technique is employed to make decoding of sent messages possible. The bandwidth of the spread spectrum is significantly larger than the bandwidth of the uncoded signal: in particular (i) the bandwidth required by a single message before encoding is proportional to its size (in bytes) and (ii) the spreading gain is roughly proportional to the maximum number of users that the network can support. Thus, if all agents transmit messages of the same size at the same time (as will be the case for the proof of the lower bound we study in this paper and also for the algorithms we discuss in Section \ref{sec:algs}), bandwidth complexity captures the frequency spectrum required for successful communication with a CDMA MAC mechanism.

\item A rigorous study of the effect of available bandwidth on channel capacity when collision-detection mechanisms such as CSMA/CA are employed is beyond the scope of our work. However we remark that, for a given message size, increasing bandwidth reduces the time required to transmit a message and thus the network load, significantly reducing the probability of collisions and thus increasing the effective throughput.
\end{itemize}

Furthermore, regardless of the MAC mechanism employed, for large signal-to-noise ratios, the maximum capacity of a wireless channel is approximately proportional to bandwidth, as shown by Shannon in \cite{CES:49}.

\subsection{Model of computation}

In this paper we study collective decision-making problems where each node in the network is endowed with an initial value $x_i$ (which can be represented with $b$ bits) and \emph{each} node should output the value of a function of the initial values of \emph{all} nodes. In other words, each agent, after exchanging messages with its neighbors and performing internal state transitions, should output $f(x_1, \ldots, x_n)$ for some computable function $f$, which we call a \emph{consensus} function.
 We formalize the notions of consensus function as follows.

\textbf{Consensus functions}:
A consensus function is a \emph{computable} %\cite{NC:80}
function $f : \mathcal X^n \mapsto \reals$ that depends on \emph{all} its arguments. More precisely, for each element $x = (x_1, \ldots, x_n) \in \mathcal X^n$ and for all $i\in\{1, \ldots, n\}$ one can find elements $x_{i}^{(1)}\in \mathcal X$ and $x_{i}^{(2)}\in \mathcal X$ such that
\[
f(x_1, \ldots, x_{i}^{(1)}, \ldots, x_n)\neq f(x_1, \ldots, x_{i}^{(2)}, \ldots, x_n).
\]
Loosely speaking, such choice of consensus function implies that each node is needed for the decision-making process. We collectively refer to problems involving the distributed computation of consensus functions (as defined above) as \emph{generalized consensus}. 

We introduce a \emph{representation} property for consensus functions that will be instrumental to derive fundamental limitations of performance in terms of the amount of information exchanged.

\textbf{Hierarchically computable consensus function}:
A consensus function is hierarchically computable if it can be written as  the composition of  a commutative and associative binary operator $\ast$, that is
\[
f(x_1, x_2,\ldots, x_n) = x_1 \ast x_2\ast \ldots\ast x_n.
\]
(The name is inspired by the observation that hierarchically computable functions can be computed with messages of small size on a \emph{hierarchical} structure such as a  tree). Furthermore, for a consensus function to be hierarchically computable, the consensus value as well as all intermediate products should be of the same size (in bytes) as the initial value. That is, if storing $x_i$ requires $\Theta(b)$ bytes, then storing the result of the operation $(x_i \ast x_j)$ and of the consensus value $f(x_1, \ldots, x_n)$ should also require $\Theta(b)$ bytes.

We remark that the class of hierarchically computable consensus functions includes average and weighed average, MAX and MIN (often used in leader election), voting and selected distributed optimization problems. We refer the reader to \cite{FR-MP:14a} for a more exhaustive characterization of this class of functions. 

\subsection{Nomenclature}
In the rest of this paper, we use the following definitions for nodes belonging to a rooted tree structure.
\begin{itemize}
\item Each rooted tree contains a \emph{root node}.
\item A node $j$ is the \emph{child} of node $i$ if (i) the shortest path from node $j$ to the root is one edge longer than the shortest path from node $i$ to the root and (ii) node $i$ and node $j$ share an edge in the tree. Conversely, node $i$ is called node $j$'s \emph{parent}.
\item A \emph{leaf node} is a node with no children.
\item A node $j$ is a \emph{descendant} of node $i$ if (i)  the shortest path from node $j$ to the root is strictly longer than the distance from node $i$ to the root and (ii) the path connecting node $j$ and the root node contains node $i$.
\item The set containing a node $j$ and all its descendants is the \emph{branch} of node $j$. 
\end{itemize}
Figure \ref{fig:treenomenclature} shows a graphical depiction of the definitions above.
\begin{figure}[h]
\centering
\includegraphics[width=.45\textwidth]{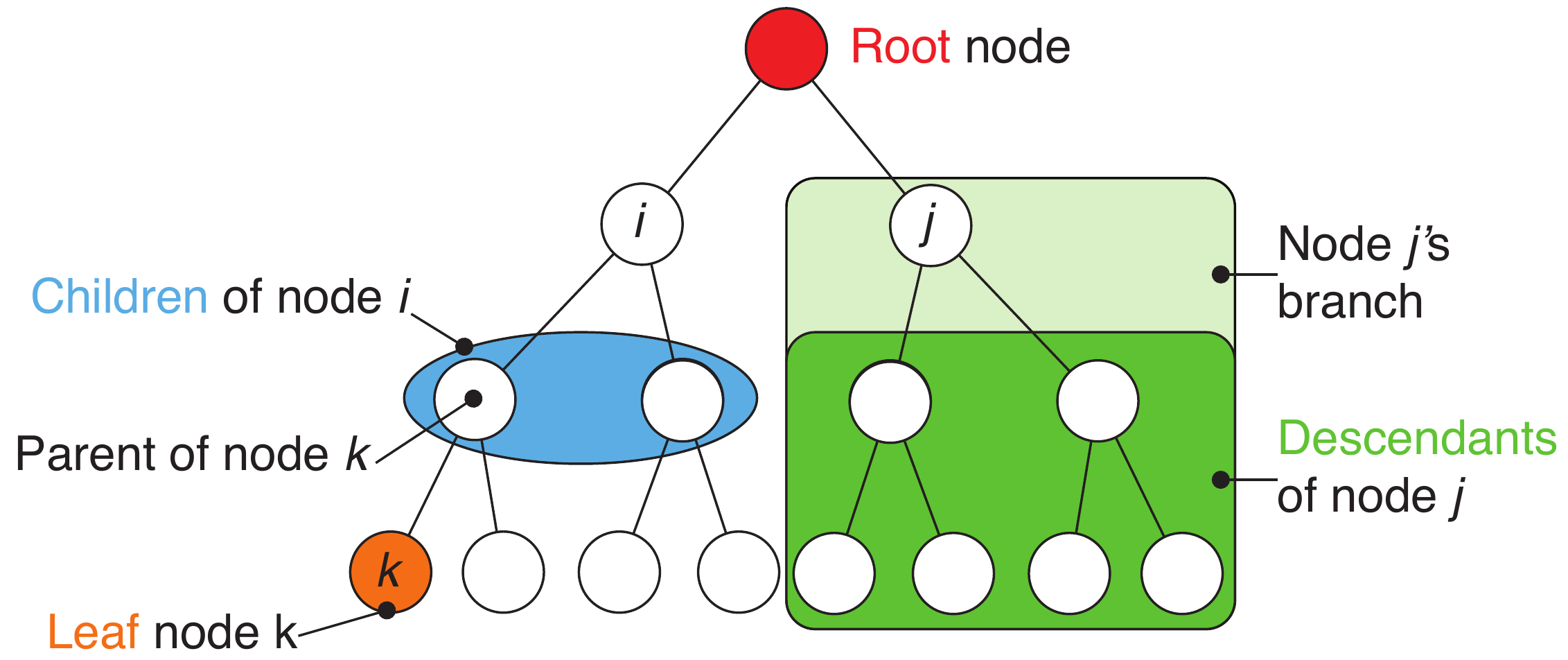}
\caption{Naming conventions for nodes belonging to a rooted tree structure}
\label{fig:treenomenclature}
\end{figure}

\section{Lower bounds on bandwidth complexity of consensus}
\label{sec:lbs}
In this section, we present two lower bounds on the bandwidth complexity of the consensus problem. The first bound applies to \emph{asynchronous} executions and only depends on very mild assumptions on the minimum number of available UIDs; the second bound applies to synchronous as well as asynchronous executions and requires slightly more restrictive assumptions on the execution time and on the minimum number of available UIDs. 

\begin{proposition}[Lower bound on the bandwidth complexity of the consensus problem in asynchronous executions]
\label{prop:asynclb}
Assume that messages carry a sender and/or a receiver ID. Assume also that agents' UIDs are selected from a set $S$ of cardinality $|S|\geq 2n$.  Then, for a given consensus function $f$ and class of graphs $\mathcal G$ with $n$ nodes,  $\textrm{FC}(f, \mathcal G) \in \Omega((n\log n +b)/d)$.
\end{proposition}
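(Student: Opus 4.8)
The plan is to prove the two halves of the bound separately, since $\textrm{FC}(f,\mathcal G)$ is a supremum and therefore dominates each term individually, and $\Omega(n\log n + b) = \Omega(\max\{n\log n, b\})$. I would fix an arbitrary algorithm $a\in\mathcal A$ and, for each term, exhibit a graph $G\in\mathcal G$, an initial assignment, a fair (asynchronous) execution, and an instant $t$ at which the in-transit volume $F(a,x,\alpha,G,t)\cdot d$ is $\Omega(n\log n)$ in the first case and $\Omega(b)$ in the second. The conceptual point throughout is that bandwidth is a \emph{peak} (single-instant) quantity, so I must not merely show that many bits are communicated over the whole run, but that the adversarial scheduler can force them to be in flight \emph{simultaneously}; this is exactly where the asynchronous model does the work.

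For the $\Omega(n\log n/d)$ term I would use a vertex-transitive graph (e.g.\ the complete graph, in which, per the communication model, all transmissions mutually interfere), so that the behaviour of a node \emph{before it has received any message} depends only on its UID. Consider the ``silent'' execution in which the scheduler withholds all deliveries, and let $P(s)$ be the predicate ``a node with UID $s$ eventually transmits in this delivery-free execution.'' If at least $n$ UIDs falsified $P$, the adversary could assign those UIDs to all $n$ nodes, in which case no message is ever sent and no node can distinguish the inputs of the others --- contradicting that $f$ depends on all its arguments. Hence fewer than $n$ UIDs falsify $P$, and since $|S|\ge 2n$, strictly more than $n$ UIDs satisfy it. I would then assign $P$-satisfying UIDs to all $n$ nodes and, exploiting the per-node freedom of the asynchronous scheduler (each node runs in isolation until its first transmission), \emph{align} these first, uninformed transmissions to a single instant $t$. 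Every such message carries a sender/receiver ID of at least $\lceil\log_2|S|\rceil \ge \log_2(2n) = \Omega(\log n)$ bits, so at $t$ the total in-transit volume is $\Omega(n\log n)$ and $F \ge \Omega(n\log n/d)$.

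For the $\Omega(b/d)$ term I would switch to a graph with a single-edge bottleneck --- a degree-one node $v$ joined by one edge $e$ to the rest --- and run an indistinguishability (fooling) argument. Since $v$ must output $f(x_1,\dots,x_n)$ and $f$ is sensitive to the far-side inputs, there is a family of far-side assignments that $v$ must be able to tell apart, and the only channel carrying this information is the transcript across $e$; a counting argument then shows this transcript must convey $\Omega(b)$ bits. Collapsing that transcript onto a single instant --- either because the information is carried in one aggregated message of size $\Omega(b)$, or, if the algorithm splits it, by using the scheduler to make the causally concurrent pieces coincide --- yields $F \ge \Omega(b/d)$.

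Combining the two constructions gives $\textrm{FC}(f,\mathcal G) \ge \Omega((n\log n+b)/d)$. I expect the main obstacle to be precisely the step common to both halves: turning a statement about how much \emph{must eventually} be communicated into a statement about what is in flight at one instant. For the $n\log n$ term this means showing that the more-than-$n$ uninformed first-transmissions can genuinely be made to overlap in real time (rather than being sequentialized by the algorithm), which rests on the independence of those transmissions and on the $|S|\ge 2n$ pigeonhole; for the $b$ term it means ruling out (or charging for) an algorithm that trickles the $\Omega(b)$ bits across $e$ in many small, causally chained messages. Handling these simultaneity arguments carefully --- and verifying that the chosen graphs lie in $\mathcal G$ --- is the crux of the proof.
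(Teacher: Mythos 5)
Your first construction is essentially the paper's own proof of the $\Omega(n\log n/d)$ term: the paper defines the set $M$ of UIDs that, for at least one initial condition, send a message before receiving one, shows by contradiction (an adversary assigning $n$ ``silent'' UIDs would yield a network that exchanges no messages and thus fails consensus) that $|M|\geq n$, and then lets the adversary assemble $n$ such nodes whose first, uninformed transmissions --- each of size at least $\log n$ because messages carry a sender/receiver ID --- are made simultaneous by the asynchronous scheduler. One quantifier in your version needs repair: a node's pre-reception behaviour depends on its UID \emph{and} its initial condition, not on the UID alone, so your predicate $P(s)$ must read ``transmits before receiving for \emph{some} initial condition,'' the complementary set must be silent for \emph{all} initial conditions, and the adversary must then pick the matching initial conditions when assembling the network --- exactly as the paper does.

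The genuine gap is in your $\Omega(b/d)$ term. The paper does not use a bottleneck-edge fooling argument at all: it observes directly that every agent must transmit its initial value at least once (otherwise, by sensitivity of $f$ to every argument, the consensus value cannot be computed correctly), and that a message conveying an initial value has size $\log n + b$, hence contributes bandwidth $(\log n + b)/d$ at the instant it is in flight. Your transcript argument only establishes that $\Omega(b)$ bits must \emph{cumulatively} cross the bottleneck edge $e$, and your proposed collapse onto a single instant does not go through: if the algorithm trickles the information in many small messages, each sent only after the previous one has been delivered, the pieces are causally \emph{chained}, not causally concurrent, and an asynchronous scheduler can align only causally independent events --- it cannot place a message in flight before the delivery that triggers its sending. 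You name this obstruction yourself (``ruling out or charging for \dots trickling'') but never resolve it, and it cannot be resolved by scheduling alone; peak bandwidth for such a trickling algorithm across $e$ really is $O(\log n/d)$ per instant unless one assumes the initial value is conveyed as an atomic payload, which is precisely the (implicit) assumption under which the paper's one-line argument works. To complete your proof, replace the fooling argument with the paper's direct charge on the initial-value-bearing message, or state the atomicity assumption explicitly.
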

\begin{proof}

When they start execution, agents decide whether to send a message or whether to wait in silence until they hear a message. Since agents have no information about other nodes before they receive a message, their decision is based solely only on their UID and, possibly, their initial condition. We consider asynchronous executions: agents do not have access to a shared clock and therefore can not make decisions based on time.

We wish to show that, if the agents' UIDs are drawn from a set $U$ with cardinality $U\geq 2n$, then there exist a subset $M\subseteq U$ with cardinality $|M|\geq n$ such that all agents with an UID from $M$ send a message before receiving a message at least for one set of initial conditions.

We proceed by contradiction. Call $S=U\setminus M$ the set of UIDs such that, for all initial conditions, an agent sends no message before receiving one message. Now, assume by contradiction that the cardinality of $M$ is smaller than $n$. Then the cardinality of $S$ is larger than $n$: there exists an assignment of $n$ UIDs to the agents such that no node in the network sends a message before receiving one. Any network formed from these agents exchanges no messages and, in particular, it fails to solve the consensus problem unless all initial conditions are identical. We have reached a contradiction.

If the cardinality of $M$ is larger than $n$, then an adversary can select $n$ UIDs and $n$ matching initial conditions from $M$ and form a network where every node sends a message before receiving one. In an asynchronous executions, all nodes can transmit their first message simultaneously. Furthermore, at least $\log n$ bits are required to store $n$ distinct UIDs: thus, if messages contains the transmitter or the receiver UID, the size of each message is lower-bounded by $\log n$. Therefore $n\log n$ bits may be transmitted simultaneously, with a bandwidth complexity of $n\log n/d$. 

Finally, we observe that every agent transmits its initial value at least once. If this were not the case, then at least one agent would never inform other nodes of its initial value: then, since the consensus function is sensitive to all initial values, the algorithm would be unable to correctly compute the consensus function. Transmission of an initial value requires $(\log n+b)$ bytes: its bandwidth complexity is $(\log n + b)/d$.

We can then conclude that the bandwidth complexity of the consensus problem is $\textrm{FC}(f, \mathcal G) \in \Omega((n\log n +b)/d)$.
 \end{proof}
 
 Proposition \ref{prop:asynclb} strongly relies on the assumption of asynchronous communication. In the next proposition we show that the lower bound on bandwidth complexity also applies to synchronous executions if (i) the pool of UIDs is ``large enough'' and (ii) the algorithm is required to terminate in a bounded number of steps.
 
 \begin{proposition}[Lower bound on the bandwidth complexity of the consensus problem in synchronous and asynchronous executions]
\label{prop:synclb}
Assume that messages carry a sender and/or a receiver ID. Assume also that the consensus algorithm terminates within $R$ synchronous rounds and that  that agents' UIDs are selected from a set $S$ of cardinality $|S|\geq R (n+1)$.  Then, for a given consensus function $f$ and class of graphs $\mathcal G$ with $n$ nodes,  $\textrm{FC}(f, \mathcal G) \in \Omega((n\log n+b) /d)$.
\end{proposition}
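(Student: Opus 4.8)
The plan is to follow the skeleton of the proof of Proposition \ref{prop:asynclb}, replacing the ``all first transmissions can be scheduled simultaneously'' step — which is only available in asynchronous executions — with a pigeonhole argument over the bounded number of synchronous rounds. As before, I would begin by classifying each UID according to the behavior of an \emph{isolated} agent carrying that UID, i.e., an agent that has not yet received any message. Since an isolated agent's transitions depend only on its UID and initial condition, and since the algorithm is assumed to terminate within $R$ synchronous rounds, such an agent either remains \emph{silent} (sends no message before receiving one, for every initial condition) or has a well-defined \emph{first-transmission round} $r \in \{1, \ldots, R\}$ for a suitable choice of initial condition.

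Next I would bound the number of silent UIDs exactly as in Proposition \ref{prop:asynclb}: if at least $n$ UIDs were silent, an adversary could populate a network with $n$ such agents, obtaining an execution in which no message is ever sent, which fails to compute a consensus function on non-constant inputs. Hence the silent UIDs number at most $n-1$, leaving at least $|S| - (n-1) \ge R(n+1) - (n-1)$ ``speaking'' UIDs, each of which can be assigned an initial condition that makes it transmit for the first time at some round in $\{1, \ldots, R\}$.

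The crux is then a counting step: distributing the speaking UIDs over the $R$ admissible first-transmission rounds, the cardinality assumption $|S| \ge R(n+1)$ is calibrated so that, by the pigeonhole principle, at least one round $r^\star$ is the first-transmission round of at least $n$ distinct UIDs. Fixing these $n$ UIDs together with the initial conditions that realize $r^\star$, and placing them on a topology in $\mathcal G$ in which their broadcasts mutually interfere, I would argue that $r^\star$ is necessarily the \emph{first} round of the whole execution at which any message is sent: no agent receives anything before $r^\star$, so up to $r^\star$ every agent behaves exactly as in isolation, and therefore all $n$ chosen agents transmit \emph{simultaneously} at round $r^\star$. Since each message carries a UID and $\Omega(\log n)$ bits are needed to encode distinct UIDs, this yields $\Omega(n \log n)$ bits transmitted at a single instant, i.e., bandwidth $\Omega(n \log n / d)$.

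Finally, I would recover the additive term verbatim from Proposition \ref{prop:asynclb}: because $f$ depends on every argument, each agent must broadcast its $b$-bit initial value at least once, and a single such message already has size $\Omega(b + \log n)$. Combining the two contributions gives $\textrm{FC}(f, \mathcal G) \in \Omega((n\log n + b)/d)$, and since the bad instance is exhibited explicitly it is valid for both synchronous and asynchronous schedulers. \textbf{The main obstacle} I anticipate is precisely this counting/pigeonhole step: unlike the asynchronous case, a synchronous algorithm can \emph{stagger} the agents' first transmissions across distinct rounds to avoid large simultaneous bursts, so the argument only forces an $n$-fold collision when the pool of UIDs is large relative to the round budget. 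Making this quantitative — checking that $|S| \ge R(n+1)$ indeed leaves at least $n$ speaking UIDs in a common round once the at-most-$(n-1)$ silent ones are discarded — is the delicate part, and it implicitly ties the usefulness of the bound to instances (e.g. graphs of diameter $\Theta(n)$) for which the round budget $R$ is commensurate with $n$.
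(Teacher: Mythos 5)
Your proposal matches the paper's proof essentially step for step: the same classification of each UID by the first round $\rho_i$ at which an isolated agent (one that has received nothing) may transmit for some initial condition, the same adversary argument capping the silent ($\rho_i=\infty$) UIDs at $n-1$, the same pigeonhole over the $R$ rounds using $|S|\geq R(n+1)$ to extract a round $\bar\rho$ at which $n$ agents transmit simultaneously messages of size $\Omega(\log n)$, and the same additive $\Omega((\log n + b)/d)$ term carried over from Proposition \ref{prop:asynclb}. The arithmetic subtlety you flag at the end — that the pigeonhole delivers $n$ UIDs in a common round only when $R$ is commensurate with $n$ — is real, but the paper's own proof leaves it equally implicit, simply asserting the pigeonhole step, so your attempt is faithful to (and slightly more self-aware than) the published argument.
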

\begin{proof}
In the synchronous setting, if an agent has received no messages from its neighbors, it decides whether to send a message or wait until it the next round based on (i) its UID $i$, (ii) its initial condition $x_i$ and (iii) the number of rounds $r$ elapsed since execution started. For each possible UID $i$, we call $\rho_i$ the \emph{smallest} round such that, for \emph{some} initial condition $x_{(i,\rho_i)}$, a node with UID $i$ and initial condition $x_{(i,\rho_i)}$ sends a message at round $\rho_i$ if it has not received a message until round $\rho_i-1$. Informally, $\rho_i$ is the first round when a node with UID $i$ may (for at least one initial condition) send a message if it hasn't received one; $\rho_i-1$ is the last round when a node with UID $i$ can not send a message unless it has received one, irrespective of its initial condition.
The number $\rho_i$ can only assume $R+1$ possible values (including $\infty$ if the node never sends a message before round $R+1$ unless it has received one). Furthermore, at most $n-1$ UIDs can have $\rho_i=\infty$: otherwise an adversary could build a network where no agent sends a message before round $R+1$, and therefore consensus is not achieved is achieved within $R$ rounds.

Then, by the pigeonhole principle, there exists a round $\bar \rho$ with $1\leq \bar\rho \leq R$ such that at least $n$ UIDs have $\rho_i=\bar \rho$. An adversary can arrange agents with these UIDs (and matching initial conditions) in a network: then all $n$ agents will send a message at round $\bar \rho$, after staying silent for the first $\bar \rho -1$ rounds. Since every message carries the transmitter or the receiver UID, the size of each message is lower-bounded by $\log n$. Thus,  $n\log n$ bits may be sent at round $\bar\rho$.

Our argument is completed by the observation that, as in Proposition \ref{prop:asynclb}, every agent sends its initial value at least once: this operation has a bandwidth complexity of $\Omega((\log n + b)/d)$.

Thus, the broadcast complexity of the consensus problem in the synchronous case is $\textrm{FC}(f, \mathcal G) \in \Omega((n\log n +b)/d)$.
\end{proof}

We have now proven a lower bound on the bandwidth complexity of the consensus problem for synchronous and asynchronous executions. In the following section we show tightness of this bound by proposing a bandwidth-optimal algorithm, then we compare its performance with commonly-used consensus procedures.

\section{Tightness of the lower bound and bandwidth complexity of common consensus algorithms}
\label{sec:algs}
In this section, we analyze the byte complexity of the average-based consensus algorithm \cite{JNT:84-extra,ROS-RMM:03c,AJ-JL-ASM:02}, the flooding consensus algorithm \cite{NL:96} and the GHS algorithm with convergecast \cite{RGG-PAH-PMS:83,FR-MP:13}. We also propose a modified version of the GHS algorithm that achieves the lower bound presented in Propositions \ref{prop:asynclb} and \ref{prop:synclb}.

\subsection{Bandwidth complexity of common consensus algorithms}
\begin{lemma}[Bandwidth complexity of the average-based consensus algorithm]
\label{prop:avgbased}
Assume that messages carry a sender and/or a receiver UID. Then the bandwidth complexity of the average-based consensus algorithm is $O(n(\log n + b)/d)$, where $b$ is the size of an agent's initial condition.
\end{lemma}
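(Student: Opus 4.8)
The plan is to upper-bound the total number of bits in simultaneous transmission at any instant and then divide by $d$. First I would recall the structure of the average-based algorithm: each node maintains an estimate initialized to its own value $x_i$, periodically broadcasts this estimate to all of its neighbors, and updates it to a weighted average of its own estimate and those it has just received. The crucial observation is that every estimate is a convex combination of the initial values $x_1, \ldots, x_n$ and therefore lies in their convex hull; consistent with the representation convention used elsewhere in the paper (where the consensus value and intermediate quantities share the size of the initial condition), each estimate can be stored in $\Theta(b)$ bits. Appending the sender and/or receiver UID, which requires $\Theta(\log n)$ bits, each transmitted message then has size $O(\log n + b)$.

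Next I would bound the number of messages in flight at a given time $t$. Under the local broadcast model a single broadcast counts as one message, independent of the number of recipients, so each node contributes at most one message at any instant. Since the network has $n$ nodes, at most $n$ messages are transmitted simultaneously, for a total of at most $n \cdot O(\log n + b) = O(n(\log n + b))$ bits. By the definition of $F$, the bandwidth at time $t$ is this total divided by the maximum transmission time $d$, so $F(a, x, \alpha, G, t) = O(n(\log n + b)/d)$. Because this bound holds uniformly over graph topologies, initial conditions, fair executions, and times, taking the suprema in the definition of $\textrm{FC}$ preserves it and yields the claimed $O(n(\log n + b)/d)$. Note that no convergence-rate or mixing-time analysis is needed, since the bandwidth metric counts only simultaneous transmissions and is insensitive to how many rounds the algorithm runs.

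I expect the only delicate point to be justifying the $\Theta(b)$ per-estimate size claim. In exact rational arithmetic, repeated averaging can make the bit-length of an estimate grow without bound as denominators accumulate, which would inflate the per-message size beyond $O(\log n + b)$. The clean resolution, and the one implicit in the statement of the lemma (which asserts the bound in terms of the initial-condition size $b$), is to adopt the fixed-precision convention in which every estimate is represented using the same $b$-bit encoding as the initial values. With this convention the counting argument above goes through verbatim; the rest of the proof is just the elementary bookkeeping of multiplying the per-message size by the maximum number $n$ of concurrent broadcasts.
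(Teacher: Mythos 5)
Your proposal is correct and follows essentially the same argument as the paper's proof: each message carries a $\Theta(b)$-bit estimate (the paper stipulates the estimate has ``the same type and size as the nodes' initial value,'' matching your fixed-precision convention) plus a $\Theta(\log n)$-bit UID, at most $n$ broadcasts occur simultaneously, and dividing by $d$ gives the bound. Your explicit discussion of why the estimate size stays at $\Theta(b)$ rather than growing under exact arithmetic is a sensible clarification of an assumption the paper leaves implicit, but it does not change the route of the proof.
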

\begin{proof}
In the average-based consensus algorithm, nodes maintain a local estimate of the consensus value (which has the same type and size as the nodes' initial value). At each time step, nodes send their estimate to their neighbor, then update their estimate as the \emph{average} of their estimate and their neighbors'. Thus, if messages carry the sender UID, each node transmits $\log n +b$ bits with each message. All agents may communicate at once: this is the case, for instance, in synchronous executions, which are a special case of more general asynchronous executions. The resulting bandwidth complexity is $O(n(\log n + b)/d)$.  
\end{proof}
\begin{lemma}[Bandwidth complexity of the flooding consensus algorithm]
\label{prop:flooding}
Assume that messages carry a sender and/or a receiver UID. Then the bandwidth complexity of the flooding consensus algorithm is $O(n^2(\log n + b)/d)$, where $b$ is the size of an agent's initial condition.
\end{lemma}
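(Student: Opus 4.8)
The plan is to mirror the structure of the proof of Lemma~\ref{prop:avgbased}, replacing the single-estimate per-message size bound with the corresponding bound for the flooding algorithm, in which each node broadcasts its entire accumulated knowledge rather than a single running estimate. First I would recall the operation of the flooding algorithm: each node maintains a set of the (UID, initial value) pairs it has learned so far, initialized with its own pair; at each transmission it broadcasts this entire set to its neighbors, and upon reception it merges the received pairs into its local set. After enough rounds every node has accumulated all $n$ pairs and can evaluate the consensus function $f$ locally, so flooding indeed solves the consensus problem.

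The key quantitative step is to bound the size of a single message. Since the network contains exactly $n$ nodes, any node's accumulated set holds at most $n$ distinct (UID, value) pairs; each pair requires $\log n$ bits for the UID and $b$ bits for the value, so a single broadcast carries at most $n(\log n + b)$ bits. Next I would observe, exactly as in Lemma~\ref{prop:avgbased}, that all $n$ nodes may be transmitting simultaneously: this occurs, for instance, in synchronous executions, which are a special case of the more general asynchronous executions over which the worst case is taken. Consequently the total volume of data in flight at a single instant is at most $n \cdot n(\log n + b) = n^2(\log n + b)$ bits, and dividing this worst-case instantaneous data volume by the maximum transmission time $d$ yields a bandwidth of $O(n^2(\log n + b)/d)$, as claimed.

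I expect the only real subtlety to lie in justifying the $O(n(\log n + b))$ per-message bound, namely that a node never transmits more than the $n$ existing (UID, value) pairs, so that the broadcast state does not grow with the number of rounds or with the multiplicity of paths along which a value arrives. This follows because duplicate pairs carry no new information and the UID set is fixed at cardinality $n$: the accumulated knowledge set of any node is therefore bounded by $n$ entries for every graph, every set of initial conditions, and every fair execution, which is precisely what makes the $n^2$ factor (rather than an execution-dependent quantity) the correct worst case.
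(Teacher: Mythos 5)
Your proof is correct and follows essentially the same approach as the paper: bound each message by $n(\log n + b)$ bits, observe that all $n$ nodes may transmit simultaneously (e.g., in a synchronous execution), and divide by $d$. The only cosmetic difference is the flooding variant you describe --- broadcasting the entire accumulated set, which makes the per-message bound uniform over all topologies --- whereas the paper's variant retransmits only \emph{new} information and therefore justifies the $O(n(\log n + b))$ message size by exhibiting a worst-case topology (the complete graph) where a node receives information from $O(n)$ nodes at once; both yield the same asymptotic bound.
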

\begin{proof}
In a flooding algorithm, at each time step, every node sends to all neighbors all \emph{new} information it has received at the previous (asynchronous) round. It is easy to observe that, for certain network topologies (e.g. for the complete graph) every node may receive information from $O(n)$ other nodes at the same time and thus retransmit information from $O(n)$ nodes simultaneously. Each piece of information from one node has size $\log n + b$: thus, all nodes may send messages of size $n(\log n + b)$. Furthermore, all nodes may send large messages simultaneously (e.g., in a synchronous execution). Thus the bandwidth complexity of the flooding algorithm is $O(n^2(\log n + b)/d)$. 
\end{proof}
\begin{lemma}[Bandwidth complexity of the GHS algorithm with convergecast]
\label{prop:GHSub}
Assume that messages carry a sender and/or a receiver UID. Assume also that the consensus function is hierarchically computable. Then the bandwidth complexity of the GHS algorithm with convergecast is $O((n\log n+nb)/d)$, where $b$ is the size of an agent's initial condition.
\end{lemma}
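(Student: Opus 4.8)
The plan is to decompose an execution of the GHS algorithm with convergecast into its two natural phases --- the distributed construction of a (minimum) spanning tree, followed by the convergecast of the aggregated value up the tree and the subsequent broadcast of the consensus value back down --- and to bound, separately for each phase, the total number of bits that can be simultaneously in flight across the whole network at any instant $t$. Since the bandwidth complexity is the supremum over $t$ of $F(a,x,\alpha,G,t)$, it suffices to bound the maximum of the two per-phase quantities and divide by $d$. The single observation that makes the whole argument go through is that, under the \emph{local broadcast} model, each node transmits \emph{one} message at a time that is simultaneously heard by all of its neighbors, at a cost independent of the number of recipients; hence at any instant at most $n$ messages are being transmitted, \emph{irrespective of the number of edges} $|E|$, which may be as large as $\Theta(n^2)$. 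This is precisely what prevents the $\Theta(n^2)$ blow-up seen in the flooding algorithm of Lemma~\ref{prop:flooding}.

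First I would bound the message size during tree construction. Every control message exchanged by GHS carries only fragment-level bookkeeping --- a fragment identifier, a level counter, and at most one edge weight or endpoint --- together with the mandatory sender/receiver UID. Each of these fields requires $\Theta(\log n)$ bits, so every control message has size $O(\log n)$. Crucially, when a node must notify several fragment neighbors of the same event (e.g. an \textsc{Initiate}), the shared payload is identical across recipients and is delivered by a single broadcast of size $O(\log n)$; when distinct per-neighbor messages are required they are issued sequentially, so a node still emits only one $O(\log n)$-bit message at any instant. With at most $n$ nodes transmitting at once, the total instantaneous load during this phase is $O(n\log n / d)$.

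Next I would bound the convergecast and broadcast phases. Here each message carries, in addition to a UID of $\Theta(\log n)$ bits, a partial product $x_{i_1}\ast\cdots\ast x_{i_k}$ (on the way up) or the final consensus value $f(x_1,\ldots,x_n)$ (on the way down). By the hierarchical-computability assumption, every such intermediate product and the consensus value occupy $\Theta(b)$ bytes, so each message has size $O(\log n + b)$. Again invoking the broadcast model, at most $n$ such messages travel simultaneously --- the worst case being realized, for example, when all children of a high-degree node report at once, as on a star-shaped tree --- giving a total of $O(n(\log n+b)/d)$ for this phase.

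Taking the supremum over time of the two per-phase bounds yields $\textrm{FC}(f,\mathcal G)\in O((n\log n + nb)/d)$, as claimed. The step I expect to require the most care is the bookkeeping argument of the tree-construction phase: one must verify that no GHS control message ever needs to aggregate per-recipient data into a single broadcast (which could inflate a message to $\Theta(n\log n)$ bits and, with $n$ simultaneous transmissions, break the bound), and that the sequential issuing of distinct messages keeps the \emph{instantaneous} load at $O(\log n)$ per node. Once this is established the convergecast bound dominates and the result follows.
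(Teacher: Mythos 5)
Your proposal is correct and follows essentially the same route as the paper's proof: the same two-phase decomposition (GHS tree construction, then convergecast/broadcast), the same $O(\log n)$ bound on GHS control messages with at most one broadcast per node at a time giving $O(n\log n/d)$, and the same $O(\log n + b)$ per-message bound in the convergecast phase with up to $n-1$ simultaneous senders (your star-tree worst case is exactly the paper's depth-one tree) giving $O(n(\log n + b)/d)$. The extra bookkeeping detail you supply about GHS message fields is simply delegated in the paper to the citation of the original GHS reference, so there is no substantive difference.
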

\begin{proof}
The GHS algorithm builds a rooted minimum spanning tree by repeatedly merging non-spanning trees across (minimum-weight) edges. The algorithm requires each edge to have a unique weight, while our model considers an unweighed graph: thus, we assign to each edge a unique, arbitrary, weight\footnote{One popular choice for edge weights is to assign to each edge a ``weight'' equal to the UIDs of the two nodes incident on the edge and use a lexicographic ordering.}. Messages exchanged by the agents during execution only contain node IDs, edge weights and boolean values: thus, the size of each message is upper-bounded by $O(\log n)$. Since no message is larger than $\log n$ and nodes only send one (broadcast) message at any given time, the bandwidth complexity of the GHS algorithm is $O(n\log n /d)$. We refer the interested reader to \cite{RGG-PAH-PMS:83} for an in-depth discussion of the algorithm. 

Once a rooted tree has been established, the root contacts all other nodes via a tree broadcast and asks them to relay their values. Nodes then relay their values through a tree convergecast: every node waits until it has heard back from all its children (if any), then computes the consensus value for its branch (this is always possible if the consensus function is hierarchically computable) and relays it to its parent. When the root learns every child's consensus value, it computes the overall consensus value and relays it to every node via a tree broadcast.

All messages exchanged in this phase have size $O(\log n + b)$: every message contains a consensus value of size $b$ and a sender and/or a receiver ID. There exist tree network topologies (e.g. a tree of depth one) where up to $n-1$ nodes may send messages simultaneously during a tree convergecast: thus, the bandwidth complexity of this phase is $O(n(b+\log n)/d)$. This completes our proof.
\end{proof}

We remark in passing that  Awerbuch's minimum spanning tree algorithm \cite{BAw:87}, which improves the GHS algorithm's time complexity to a time-optimal $O(n)$, is \emph{not} bandwidth-optimal: in particular, the \emph{Test-Distance} procedure has a bandwidth complexity of $O(n\log^2 n)$ during certain executions on selected network topologies, since each Test-Distance message needs to keep track of the return route to its sender.

\subsection{Tightness of the lower bound on bandwidth complexity}
The GHS-inspired consensus algorithm outlined above is not bandwidth-optimal: optimality, however, can be achieved with a simple modification that does not influence asymptotic time, message or byte complexity\footnote{Informally, time complexity is the time required by the algorithm to converge, message complexity is the overall number of messages exchanged by all agents and byte complexity is the overall number of bytes exchanged by all agents. We refer the reader to \cite{FR-MP:13} and \cite{FR-MP:14a} for a rigorous definition of these complexity metrics.}.

The tree-building phase is unchanged. When computing the nodes's consensus function, we exploit the tree structure to make sure that only one node sends a message at any given time. Specifically, once a rooted tree has been established, every node contacts its children \emph{one by one}: children are ordered arbitrarily and every child node is only contacted once the previous child has returned its branch's consensus function.
Pseudocode for the algorithm is reported in Algorithm \ref{alg:slowconvergecast}.

\begin{algorithm}
\caption{Bandwidth-optimal consensus function computation on a tree} \label{alg:slowconvergecast}
\begin{algorithmic}
\floatname{algorithm}{Procedure}
\renewcommand{\algorithmicrequire}{\textbf{Input:}}
\renewcommand{\algorithmicensure}{\textbf{Output:}}

\Require $IsRoot$ \Comment{a bool}
\State $Children$ \Comment{An ordered list of child nodes}
\State $Parent$ \Comment{The parent node's ID}
\State $x$ \Comment{The node's initial condition}
\State $c$ \Comment{A local estimate of the consensus value}

\If{$IsRoot$ is true}
\State\Call{ComputeConsensusValue}{}\Comment{The root initializes the consensus procedure}
\EndIf

\Procedure{ComputeConsensusValue}{}
\ForAll{$Child$ in $Children$}
	\State Ask $Child$ to \Call{ComputeConsensusValue}{}
	\State Wait for $Child's$ \Call{ConsensusReply}{$c_{Child}$}.
	\State $c \gets$ \Call{UpdateConsensus}{$c$,$c_{Child}$}
\EndFor
\If{$IsRoot$ is true} \Comment{Inform every node of the consensus value}
	\State Ask $Child$ to \Call{RelayConsensusValue}{c}
	\State Wait for $Child's$ \Call{AckConsensus}{}.
\Else
\State Send $Parent$ a \Call{ConsensusReply}{$c$}\Comment{Inform parent of the branch's consensus value}
\EndIf
\EndProcedure

\Procedure{RelayConsensusValue}{$c_r$}
\State $c\gets c_r$\Comment{Store the consensus value}
\ForAll{$Child$ in $Children$}
	\State Ask $Child$ to \Call{RelayConsensusValue}{c}
	\State Wait for $Child's$ \Call{AckConsensus}{}.
\EndFor
\If{$IsRoot$ is false}
\State Send $Parent$ an \Call{AckConsensus}{}
\EndIf
\State Terminate
\EndProcedure

\Function{UpdateConsensus}{$c_i,x_j$}
\State $x_i\gets c_i * x_j$
\EndFunction

\end{algorithmic}
\end{algorithm}

It is easy to prove that the bandwidth complexity of this consensus function computation on a tree is $O((\log n + b)/d)$.
\begin{lemma}[Bandwidth complexity of Algorithm \ref{alg:slowconvergecast}]
\label{prop:ccfc}
Assume that messages carry the sender and/or the receiver ID. Assume also that the consensus function is hierarchically computable. Then the bandwidth complexity of Algorithm \ref{alg:slowconvergecast} is $O((\log n + b)/d)$. 
\end{lemma}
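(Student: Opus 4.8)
The plan is to establish the bound by combining two facts: every message exchanged by Algorithm \ref{alg:slowconvergecast} has size $O(\log n + b)$, and the algorithm is structured so that at most one message is ever in transmission at any given instant. Because bandwidth is defined as the total size of all messages being transmitted at a given time divided by $d$, these two observations immediately yield $F(\cdot) \in O((\log n + b)/d)$ at every instant, and taking the supremum over graph topologies, initial conditions, fair executions, and time gives the claimed bound for the algorithm (the tree-building phase is not part of Algorithm \ref{alg:slowconvergecast} and was bounded separately in Lemma \ref{prop:GHSub}).

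First I would bound the message sizes. Inspecting the pseudocode, the messages fall into four types: the requests \textsc{ComputeConsensusValue} and \textsc{RelayConsensusValue}, the \textsc{ConsensusReply}, and the \textsc{AckConsensus}. Each carries at most one consensus value or intermediate product — which, since $f$ is hierarchically computable, has size $\Theta(b)$ — together with a sender and/or receiver UID of size $O(\log n)$ and an $O(1)$ control tag. Hence every message has size $O(\log n + b)$, and this step is routine.

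The crux is the serialization argument, which I expect to be the main obstacle. I would show by induction on the depth-first execution order that the algorithm maintains a single ``active token'': a node issues a request to a child only after the previous child has returned its reply, and then blocks until that reply arrives; symmetrically, a node sends its \textsc{ConsensusReply} (resp.\ \textsc{AckConsensus}) only after receiving the corresponding request and collecting all of its own children's replies. Consequently the messages of an execution form a single causal chain $m_1, m_2, \ldots$ in which $m_{k+1}$ is dispatched only after $m_k$ has been received. The delicate point is ruling out an adversarial execution with near-instantaneous delivery and processing in which many length-$d$ transmission windows pile up and overlap; this is precisely what would break the bound. The resolution is the timing observation that, under the constant-rate transmission model used for bandwidth analysis, a message dispatched at time $t$ occupies the channel over the full window $[t, t+d]$ and cannot be decoded — hence cannot trigger the next message in the chain — until its last bit arrives at $t+d$. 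Therefore the interiors of the windows of $m_k$ and $m_{k+1}$ are disjoint, so along the single token chain at most one message is in transmission at any instant. Combining this at-most-one-in-flight property with the per-message bound $O(\log n + b)$ completes the proof.
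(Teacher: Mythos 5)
Your proposal is correct and follows essentially the same route as the paper's proof: the token-passing serialization argument (a node transmits only while holding the never-duplicated token, so at most one message is in flight at any instant) combined with the per-message size bound $b + O(\log n)$. Your additional window-disjointness discussion is a fine explicit defense of a point the paper leaves implicit, but it follows directly from the causal chain itself --- each message is dispatched only after the previous one is fully received --- so it does not constitute a different approach.
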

\begin{proof}
The algorithm is, essentially, a token-passing algorithm. The token originates at the root. Nodes, starting with the root, pass the token to their children, one at a time, when they ask each child to compute its consensus value; children pass the token back to their parent when they relay the local estimate of the consensus value. Nodes (starting at the root) then pass the token to their children, one at a time, when they relay the consensus value; children return the token to their parent when they acknowledge reception of the consensus value. It is easy to see that (i) a node only sends a message when it holds the token and (ii) the token is never duplicated, since each node only contacts one child when it has heard back from the previous child and each node only has one parent. Thus, only one node can send a message at any given time. Messages contain a local estimate of the consensus value, a sender ID and a receiver ID: their size is $b+2\log n$. The claim follows.
\end{proof}

Lemma \ref{prop:ccfc} allows us to prove tightness of the lower bounds on bandwidth complexity.
\begin{proposition}[Bandwidth complexity of the consensus problem in asynchronous executions]
Assume that messages carry the sender and/or the receiver ID. Assume also that agents' UIDs are selected from a set $S$ of cardinality $|S|\geq 2n$ and that the consensus function is hierarchically computable. 
Then, for a given consensus function $f$ and class of graphs $\mathcal G$ with $n$ nodes,  $\textrm{FC}(f, \mathcal G) \in \Theta((n\log n+b) /d)$.
\end{proposition}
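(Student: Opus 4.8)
The plan is to establish $\Theta((n\log n + b)/d)$ by proving matching $\Omega$ and $O$ bounds, and to observe that the lower bound is already available under exactly the present hypotheses, so that essentially only the upper bound remains to be argued.

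For the lower bound I would directly invoke Proposition \ref{prop:asynclb}. Its hypotheses---messages carry a sender and/or receiver ID, and UIDs are drawn from a set $S$ with $|S| \geq 2n$---coincide with those of the current statement, so it immediately yields $\textrm{FC}(f, \mathcal G) \in \Omega((n\log n + b)/d)$. The additional assumption here, hierarchical computability of $f$, is not needed for the lower bound but will be essential for the upper-bound construction.

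For the upper bound, recall that bandwidth complexity is defined as an \emph{infimum} over algorithms $a \in \mathcal A$; hence it suffices to exhibit a single algorithm whose worst-case bandwidth is $O((n\log n + b)/d)$. I would use the modified GHS algorithm described above: first build a rooted minimum spanning tree via GHS, then compute the consensus value on that tree using the token-passing procedure of Algorithm \ref{alg:slowconvergecast}. The two phases are temporally disjoint, so at no instant $t$ are messages from both phases in flight. During the tree-building phase, Lemma \ref{prop:GHSub} gives a per-instant bandwidth of $O(n\log n/d)$, since every exchanged message contains only IDs, edge weights, and booleans, each of size $O(\log n)$. During the consensus-computation phase, Lemma \ref{prop:ccfc} gives a per-instant bandwidth of $O((\log n + b)/d)$, because the token-passing structure guarantees that exactly one node transmits at a time, and that node's message has size $b + 2\log n$; this is precisely where hierarchical computability is used, as it makes the branch-wise consensus values well defined.

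Because bandwidth complexity is the \emph{supremum} over time $t \in [0, T]$, and the two phases do not overlap, the worst-case bandwidth of the whole algorithm is the maximum---not the sum---of the two phase bandwidths, namely $O(\max\{n\log n, \log n + b\}/d) = O((n\log n + b)/d)$, using $\max\{x,y\} \le x + y \le 2\max\{x,y\}$. Combining this with the lower bound from Proposition \ref{prop:asynclb} yields $\textrm{FC}(f, \mathcal G) \in \Theta((n\log n + b)/d)$. I expect no genuine obstacle, as the technical content is carried by the cited lemmas; the one point requiring care is the observation that the per-instant (max-over-time) nature of the bandwidth measure lets the two temporally separated phases combine via a maximum rather than a sum, so that the $O(n\log n/d)$ tree-building cost and the $O((\log n+b)/d)$ convergecast cost together give exactly $O((n\log n + b)/d)$ and meet the lower bound.
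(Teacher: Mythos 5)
Your proposal is correct and follows essentially the same route as the paper's own proof: the lower bound is cited from Proposition~\ref{prop:asynclb}, and the upper bound comes from combining the tree-building phase of GHS (first part of Lemma~\ref{prop:GHSub}, $O(n\log n/d)$) with the token-passing convergecast of Algorithm~\ref{alg:slowconvergecast} (Lemma~\ref{prop:ccfc}, $O((\log n + b)/d)$). Your additional remark that the per-instant (sup-over-time) nature of the bandwidth metric lets the two temporally separated phases combine via a maximum rather than a sum is a careful explication of a point the paper leaves implicit, not a departure from its argument.
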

\begin{proof}
In order to prove the claim, we need to find an algorithm with bandwidth complexity $O((n\log n +b)/d)$. The GHS algorithm has a bandwidth complexity of $O(n\log n)$, as shown in the first part of Lemma \ref{prop:GHSub}. Once a tree structure has been established, Algorithm \ref{alg:slowconvergecast} computes the consensus function with a bandwidth complexity of $O((\log n + b)/d)$. The claim then follows from Proposition \ref{prop:asynclb}.
\end{proof}

The proof of the next proposition is identical to the one above and is therefore omitted.

\begin{proposition}[Bandwidth complexity of the consensus problem in synchronous and asynchronous executions]
Assume that messages carry a sender and/or a receiver ID. Assume also that the consensus algorithm terminates within $R$ synchronous rounds and that  that agents' UIDs are selected from a set $S$ of cardinality $|S|\geq R (n+1)$.  Then, for a given consensus function $f$ and class of graphs $\mathcal G$ with $n$ nodes,  $\textrm{FC}(f, \mathcal G) \in \Theta((n\log n+b) /d)$.
\end{proposition}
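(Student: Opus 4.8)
The plan is to establish the two matching bounds that a $\Theta$-statement requires, reusing essentially all the machinery already assembled, exactly as in the asynchronous tightness proposition that immediately precedes it. The lower bound $\textrm{FC}(f,\mathcal G)\in\Omega((n\log n+b)/d)$ is already in hand: the hypotheses of the present statement (termination within $R$ synchronous rounds together with a UID pool of cardinality $|S|\geq R(n+1)$) are precisely those of Proposition \ref{prop:synclb}, so I would simply invoke that proposition for the $\Omega$ half and spend no further effort on it.

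For the upper bound I would exhibit the same constructive algorithm used to prove tightness in the asynchronous case, namely GHS followed by the token-passing convergecast of Algorithm \ref{alg:slowconvergecast}. The tree-building phase of GHS transmits only node IDs, edge weights and boolean flags, so by the first part of Lemma \ref{prop:GHSub} its bandwidth complexity is $O(n\log n/d)$; once the rooted tree is in place, Algorithm \ref{alg:slowconvergecast} computes the hierarchically computable consensus function with bandwidth $O((\log n+b)/d)$ by Lemma \ref{prop:ccfc}. Since the two phases run sequentially, the worst-case bandwidth is the larger of the two contributions, giving $O((n\log n+b)/d)$ overall.

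The one point that deserves a sentence of care — and the only place where this argument is not word-for-word the asynchronous one — is checking that the exhibited algorithm belongs to the restricted class over which the infimum defining $\textrm{FC}$ is taken in the synchronous setting, i.e. that it terminates within $R$ rounds on the prescribed UID pool. This is immediate: a synchronous schedule is a special case of the asynchronous schedules for which GHS and Algorithm \ref{alg:slowconvergecast} were designed, both phases terminate in a number of rounds polynomial in $n$ (in fact $O(n\log n)$), and the bound $|S|\geq R(n+1)$ is only stronger than the $|S|\geq 2n$ needed for correctness in Proposition \ref{prop:asynclb}. I expect this bookkeeping to be the sole (minor) obstacle; everything else is inherited verbatim.

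Combining the $\Omega$ bound from Proposition \ref{prop:synclb} with the $O$ bound from the GHS-plus-Algorithm \ref{alg:slowconvergecast} construction yields $\textrm{FC}(f,\mathcal G)\in\Theta((n\log n+b)/d)$, as claimed.
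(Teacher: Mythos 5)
Your proposal matches the paper's intended argument exactly: the paper omits this proof, stating it is identical to the asynchronous tightness proposition, i.e.\ the $\Omega$ half from Proposition~\ref{prop:synclb} combined with the $O((n\log n + b)/d)$ upper bound from GHS (Lemma~\ref{prop:GHSub}) followed by Algorithm~\ref{alg:slowconvergecast} (Lemma~\ref{prop:ccfc}), which is precisely what you reconstruct. Your extra remark about verifying that the exhibited algorithm terminates within the $R$ synchronous rounds is a point the paper silently skips (and, strictly, your resolution needs $R \in \Omega(n\log n)$ rather than mere polynomial termination), so you are if anything slightly more careful than the published text.
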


We note in passing that Algorithm \ref{alg:slowconvergecast} has the same worst-case time, message and byte complexity as the convergecast algorithm proposed in \cite{FR-MP:13}, as shown in the two lemmas below.
\begin{lemma}[Time complexity of Algorithm \ref{alg:slowconvergecast}]
\label{prop:cctc}
The time complexity of Algorithm \ref{alg:slowconvergecast} is $O(n)$.
\end{lemma}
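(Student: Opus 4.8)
The plan is to leverage the token-passing structure of Algorithm \ref{alg:slowconvergecast}, already isolated in the proof of Lemma \ref{prop:ccfc}, in order to reduce the time complexity to a simple count of the number of token transfers along the execution's critical path. Recall that time complexity is the worst-case time to convergence. Because at most one node holds the token at any instant and a node transmits only while holding the token, the execution is completely serialized: there is no concurrency to exploit, so the total running time is just the sum, over the sequence of token transfers, of the time taken by each transfer. Each transfer comprises a bounded number of local transitions (the receiving node updating its estimate and emitting the next message) together with a single message delivery; by the non-blocking assumption each transition is activated within $l$ time units and each message is delivered within $d$ time units, so every hop costs $O(l+d)=O(1)$ time. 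It therefore suffices to bound the number of hops.

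Next I would count the hops by observing that the token performs two depth-first traversals of the rooted tree. In the \textsc{ComputeConsensusValue} phase a node contacts its children one at a time and, for each child, waits until that child has returned the consensus value of its entire branch before proceeding to the next child; this is exactly a depth-first visit in which each tree edge is crossed once downward (the request to compute) and once upward (the \textsc{ConsensusReply}). In the \textsc{RelayConsensusValue} phase, initiated by the root once it has assembled the overall consensus value, the same pattern repeats, with each edge crossed once downward (the relay request) and once upward (the \textsc{AckConsensus}). Since the tree has $n-1$ edges, each phase contributes $2(n-1)$ hops, for a total of $4(n-1)$ token transfers.

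Combining the two observations, the overall execution time is at most $4(n-1)\,(cl+d)$ for some absolute constant $c$ bounding the number of local transitions per hop, which is $O(n)$ for constant $l$ and $d$, establishing the claim. The only point requiring care is the justification that the execution is genuinely serial, so that per-hop costs add rather than overlap; this is the step I expect to be the main (though mild) obstacle. It is inherited directly from Lemma \ref{prop:ccfc}: because each node contacts its next child only after the previous child has returned the token, and because each node has a unique parent, the token is never duplicated and at most one message is ever in flight. Hence the critical path coincides with the token's entire itinerary, and no finer analysis of concurrency is needed.
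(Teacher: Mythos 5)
Your proof is correct and follows essentially the same route as the paper: you count the same $4(n-1)$ messages (the paper counts them node-by-node, you count them as two depth-first traversals crossing each of the $n-1$ edges twice, which is the same tally) and, like the paper, you invoke the serialization established in Lemma \ref{prop:ccfc} to conclude that these message times add up rather than overlap. Your explicit per-hop accounting via the non-blocking constants $l$ and $d$ is a slightly more careful rendering of a step the paper leaves implicit, but the argument is the same.
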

\begin{proof}
Every node except for the root receives two messages from its parent (one asking to compute its branch's consensus value and one relaying the network's consensus value) and sends two messages to its parent (one informing the parent of the branch's consensus value and one to acknowledge reception of the network's consensus value). Only one message is sent at any given time,as shown in Proposition \ref{prop:ccfc} . Thus, the time complexity of Algorithm \ref{alg:slowconvergecast} is $4(n-1)$. 
\end{proof}
\begin{lemma}[Message and byte complexity of Algorithm \ref{alg:slowconvergecast}]
Assume that messages carry the sender and/or the receiver ID. Assume also that the consensus function is hierarchically computable. Then the message and byte complexity of Algorithm \ref{alg:slowconvergecast} are $O(n)$ and $O(n(\log n + b)$ respectively. 
\end{lemma}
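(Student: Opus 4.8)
The plan is to reuse the per-node message accounting already established in the time-complexity analysis (Lemma~\ref{prop:cctc}) and simply attach a size bound to each message. First I would recall the token-passing structure guaranteed by Lemma~\ref{prop:ccfc}: the token is never duplicated, each node contacts its children one at a time, and each node has a unique parent. Consequently the traffic across any single tree edge $(i,\mathrm{parent}(i))$ consists of exactly four messages over the whole execution: the \textsc{ComputeConsensusValue} request sent down from the parent, the \textsc{ConsensusReply} sent back up, the \textsc{RelayConsensusValue} request sent down, and the \textsc{AckConsensus} sent back up.

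Since a tree on $n$ nodes has exactly $n-1$ edges, the total number of messages is $4(n-1)$, which is $O(n)$; this establishes the message complexity. Equivalently, one can argue node-by-node as in Lemma~\ref{prop:cctc}: every non-root node receives two messages from its parent (one to compute its branch's value, one relaying the network value) and sends two messages to its parent (one carrying the branch value, one acknowledging the relayed value).

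Next I would bound the byte cost of each individual message. By assumption every message carries a sender and/or receiver ID, costing $O(\log n)$ bits. The only payload-bearing messages are the \textsc{ConsensusReply} and \textsc{RelayConsensusValue} messages, each of which carries a consensus value; because the consensus function is hierarchically computable, every such value---whether an initial condition, an intermediate product $x_i \ast x_j$, or the final consensus value---occupies $\Theta(b)$ bits. The remaining messages carry only control information and IDs. Hence every message has size $O(\log n + b)$.

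Multiplying the $4(n-1)$ messages by the per-message bound $O(\log n + b)$ yields a byte complexity of $O(n(\log n + b))$, completing the proof. I do not expect any genuine obstacle here: the argument is a direct bookkeeping exercise, and the only points requiring care are confirming that no message is ever resent or duplicated---which is precisely the no-duplication property of the token already proved in Lemma~\ref{prop:ccfc}---and invoking the hierarchical-computability hypothesis to keep every transmitted value at size $\Theta(b)$.
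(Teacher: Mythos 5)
Your proposal is correct and follows essentially the same route as the paper: the paper's proof also multiplies the $4(n-1)$ message count from Lemma~\ref{prop:cctc} by the per-message size bound $O(\log n + b)$ from Lemma~\ref{prop:ccfc}, relying on hierarchical computability to keep every transmitted value at $\Theta(b)$ bits. Your edge-by-edge accounting is just a slightly more explicit version of the paper's node-by-node count, so there is no substantive difference.
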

\begin{proof}
As shown in Proposition \ref{prop:cctc}, $4(n-1)$ messages are sent in the network. Furthermore, if the consensus function is hierarchically computable, each message has size $(\log n + b)$, as shown in Proposition \ref{prop:ccfc}. The claim follows.
\end{proof}

\section{A tunable algorithm}
\label{sec:tunable}
The algorithms presented in Section \ref{sec:algs} offer optimal performance with respect to different complexity metrics.\

The flooding algorithm is time-optimal for any graph (and not only worst-case optimal over the class $\mathcal G$ of graphs with $n$ nodes); furthermore, it offers \emph{maximal} robustness to disruptions of a communication channel and performs well on time-varying networks. On the other hand, the bandwidth complexity of flooding is the worst among the algorithms we study and its byte complexity (discussed in \cite{FR-MP:13}) is also very suboptimal.

The average-based algorithm performs no better than flooding with respect to byte complexity and robustness, and it has significantly worse time complexity. On the other hand, its bandwidth complexity is better than the flooding algorithm's, but it is not optimal.

The GHS-inspired algorithm has optimal bandwidth and byte complexity. However, the algorithm has minimal robustness margins to the disruption of a communication channel.

In this section we show how the hybrid, tunable algorithm proposed by the authors in \cite{FR-MP:13} achieves \emph{intermediate} bandwidth performance between the flooding algorithm and the GHS-inspired algorithm, recovering the time-optimal behavior of the flooding algorithm and the bandwidth-optimal behavior of GHS for different values of the tuning parameter. This result complements our previous findings on the time and byte complexity of the hybrid algorithm and shows how this algorithm can achieve \emph{mixed} time/bandwidth performance metrics. %In particular, our algorithm trades the time complexity of 

Due to space limitations, we only report a high-level description of the algorithm: we refer the interested reader to \cite{FR-MP:13} for details\footnote{We remark two very minor changes with respect to the detailed description in \cite{FR-MP:13}: we replace the challenge-response exchanges at the end of Phase 1 and at the beginning of Phase 3 with broadcasts, to better exploit the \emph{local broadcast} communication model; furthermore, we do not duplicate messages exchanged between clusters for simplicity.}.

Our algorithm operates in four phases. 
Phase 1 starts by building a forest of \emph{minimum weight} trees (shown in Figure \ref{fig:hybridalgphase1}) of height $O(n/m)$. 
All nodes run a modified version of the GHS algorithm \cite{RGG-PAH-PMS:83} which only differs from the original algorithm with respect to the stopping criterion. When a node discovers that Phase 1 is over, it informs its neighbors with a broadcast. When a node has finished Phase 1 and all its neighbors have, too, it enters Phase 2.

 In Phase 2, tree height is upper-bounded by splitting clusters with too many agents while enforcing a lower bound on tree size. This phase of the algorithm starts at the leaves of each tree. Each node recursively counts the number of its descendants moving towards the root; agents with more than $\lfloor n/m \rfloor$ offspring create a new cluster, of which they become the root, and cut the connection with their fathers. The tree containing the original root may be left with too few nodes: the root can undo one cut to counter this.

In Phase 3, each tree establishes connections with neighbor clusters, as shown in Figure \ref{fig:hybridalgphase3}. When a node switches to Phase 3, it informs all neighbors of its Cluster ID with a broadcast. The root of each cluster is then informed of the available connections to neighbor clusters with a convergecast starting at the leaves. Specifically, as soon as a node knows (i) what clusters its children are connected to (either directly or through their children) and (ii) each neighbor agent's cluster ID, it informs its parent about which clusters it is connected to (either directly or indirectly).

Roots also exploit the tree structure to compute their cluster's consensus function via Algorithm \ref{alg:slowconvergecast}: once they have computed the cluster's consensus function, they switch to Phase 4.

In Phase 4, cluster roots communicate with each other through the connections discovered in the previous stage, as shown in Figure \ref{fig:hybridalgphase4}. Conceptually, this phase of the algorithm is simply flooding across clusters. Each root sends a message containing its cluster's consensus function to each neighbor tree through the connections built in Phase 3.
When a root learns new information, it forwards it once to its neighbor clusters via the same mechanism.

If a link failure breaks one of the trees (as in Figure \ref{fig:hybridalgphaseF}), the two halves evaluate their size. If either of the two halves is too small, it rejoins an existing cluster; a splitting procedure guarantees that tree height stays bounded. After failure, all nodes update their routing tables.

Finally, when a link outside a tree fails, nodes on the two sides of the failure update their routing tables and notify their cluster roots.

\begin{figure}[h]
\centering
\begin{subfigure}[b]{0.2\textwidth}
\includegraphics[width=\textwidth]{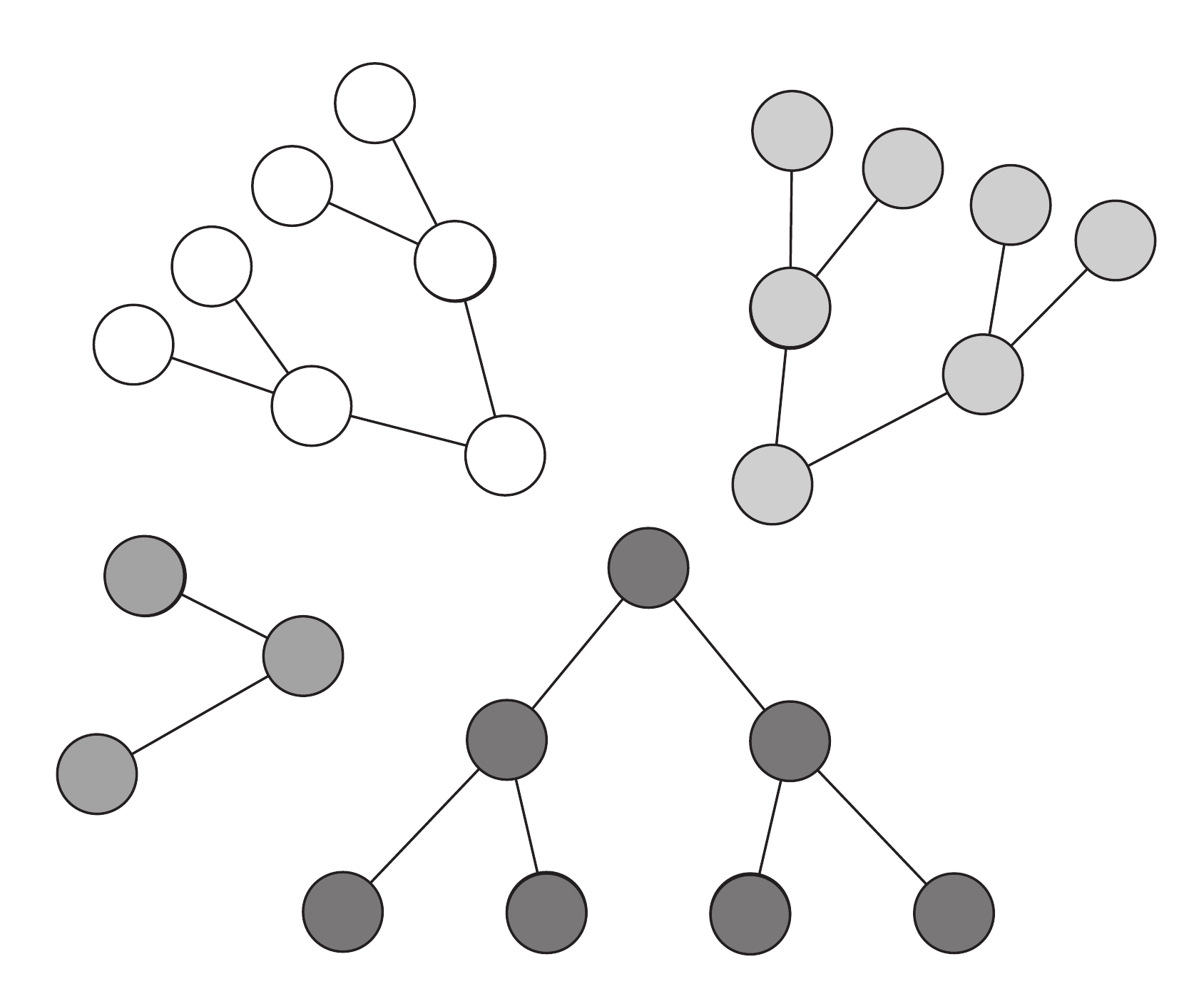}
\caption{Phase 1 and 2: forest building}
\label{fig:hybridalgphase1}
\end{subfigure}
\begin{subfigure}[b]{0.2\textwidth}
\includegraphics[width=\textwidth]{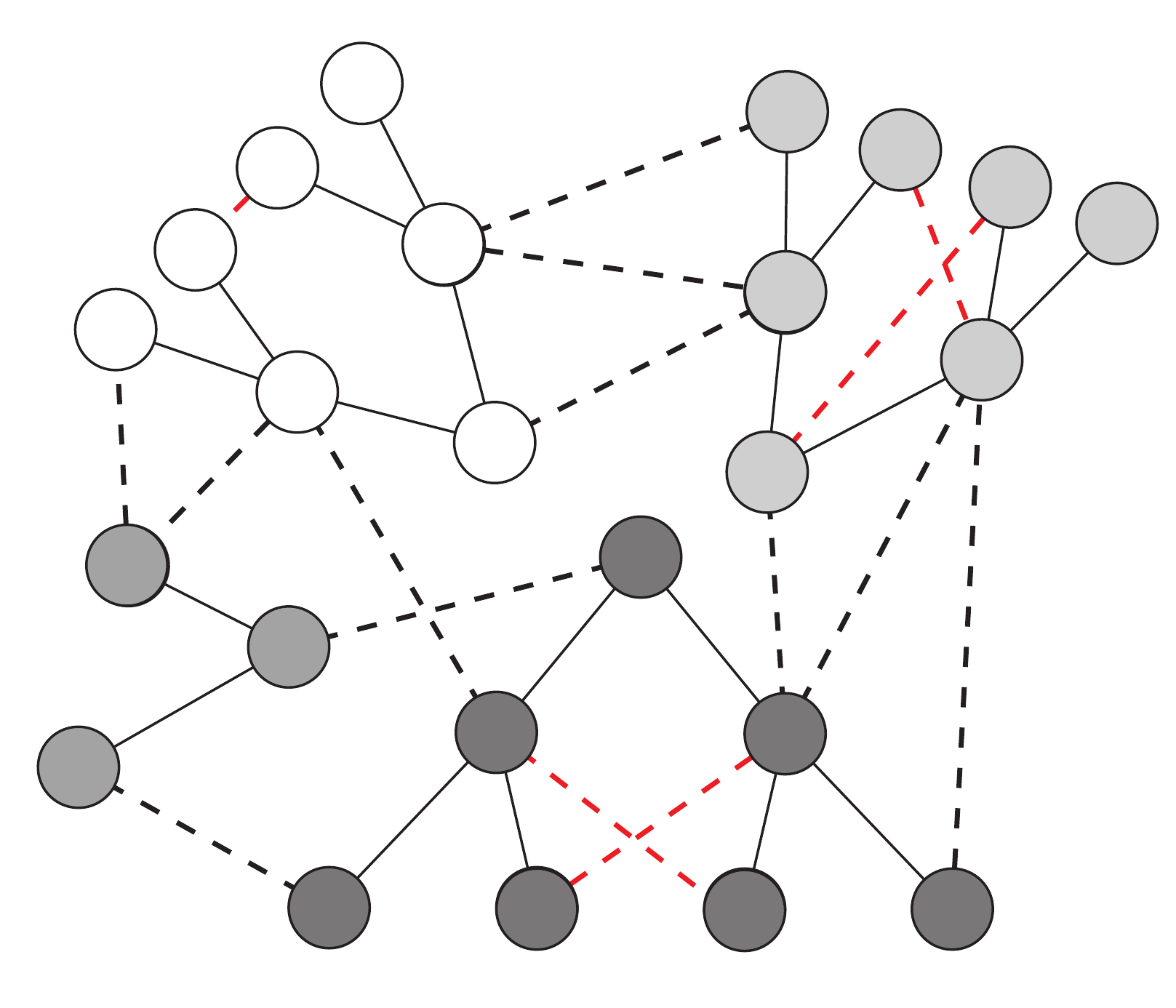}
\caption{Phase 3: establishment of inter-cluster links}
\label{fig:hybridalgphase3}
\end{subfigure}

\begin{subfigure}[b]{0.2\textwidth}
\includegraphics[width=\textwidth]{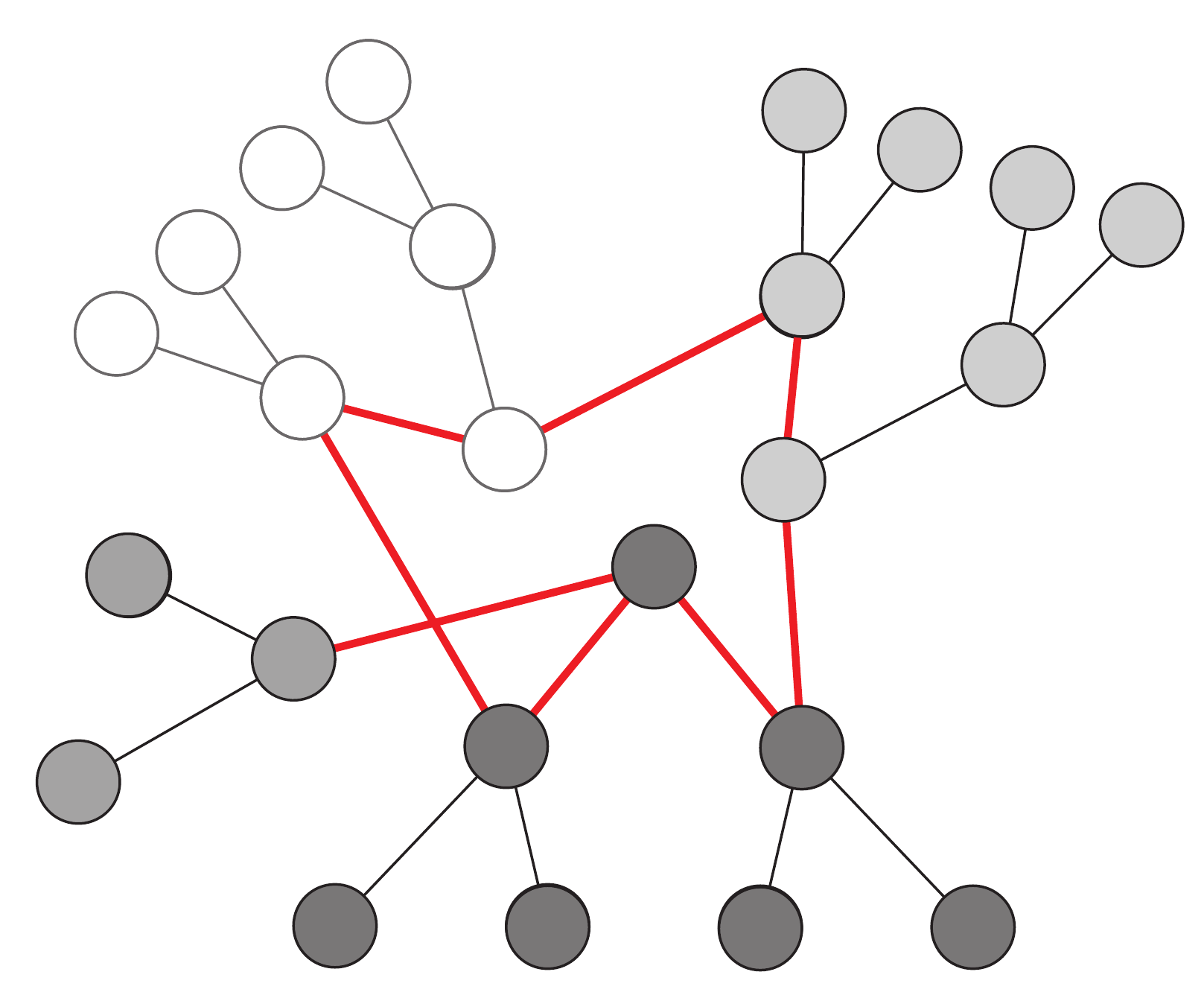}
\caption{Phase 4: inter-cluster flooding}
\label{fig:hybridalgphase4}
\end{subfigure}
\begin{subfigure}[b]{0.2\textwidth}
\includegraphics[width=\textwidth]{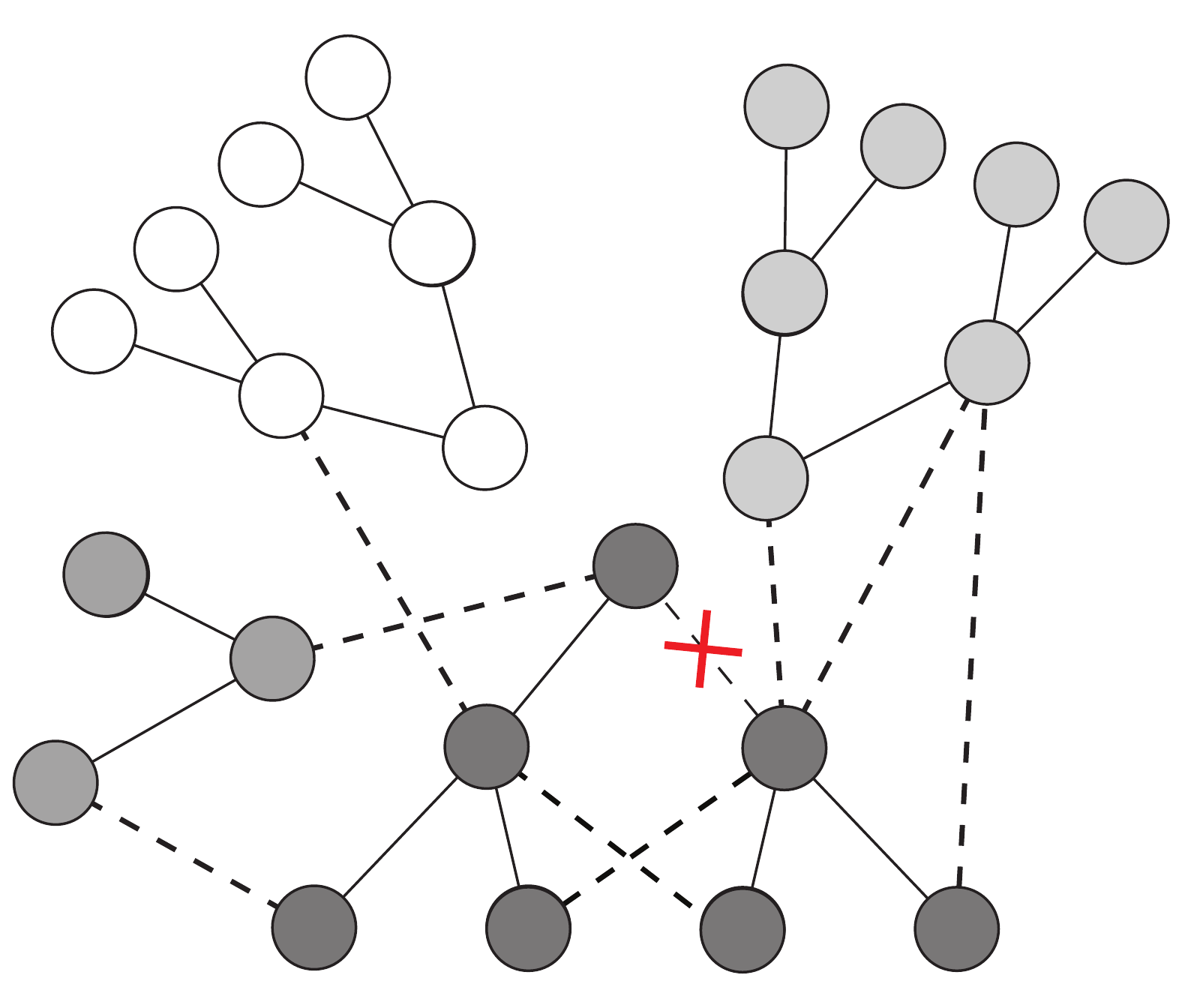}
\caption{Phase F: recovery from failure}
\label{fig:hybridalgphaseF}
\end{subfigure}
\caption{Schematic representation of the hybrid algorithm behavior.}
\label{fig:hybridalgschematic}
\end{figure}

We now study the bandwidth complexity of the tunable consensus algorithm.

\begin{lemma}[Bandwidth complexity of Phase 1 of the tunable algorithm]
\label{lemma:hybrid1ub}
Assume that messages carry the sender UID. Assume also that the consensus function is hierarchically computable. Then the bandwidth complexity of Phase 1 of the tunable algorithm is $O(n\log n/d)$.
\end{lemma}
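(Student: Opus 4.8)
The plan is to reduce this claim almost entirely to the message-size analysis already carried out in the first part of Lemma \ref{prop:GHSub}. Phase 1 runs a modified version of the GHS algorithm that, as described above, differs from the standard GHS procedure \emph{only} in its stopping criterion. Since the stopping criterion governs \emph{when} a node halts the tree-merging process but does \emph{not} alter the \emph{content} of the messages exchanged, the messages sent during Phase 1 are precisely those sent during a standard GHS execution: node IDs, edge weights, and boolean flags. As argued in Lemma \ref{prop:GHSub}, each such message can be encoded in $O(\log n)$ bits, since $\log n$ bits suffice for a UID, the edge weights (under the lexicographic weighting by endpoint UIDs) are indexed with $O(\log n)$ bits, and the boolean values contribute $O(1)$ bits. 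Thus the work consists of checking that the modification does not disturb this per-message bound and then bounding the instantaneous throughput.

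First I would verify that the additional communication introduced by the modified stopping criterion does not inflate the per-message size. When a node discovers that Phase 1 is over, it informs its neighbors with a broadcast; this broadcast carries only the sender UID together with a boolean flag signaling completion, and is therefore also of size $O(\log n)$. Hence every message transmitted throughout Phase 1, the GHS tree-building messages and the stopping broadcasts alike, has size $O(\log n)$. Next I would bound the instantaneous throughput directly from the definition of bandwidth complexity: since it measures the worst-case total size of all messages in flight at a single instant divided by $d$, and since in the worst case (for instance in a synchronous execution) all $n$ nodes may broadcast simultaneously, the total number of bits transmitted at any one instant is at most $n \cdot O(\log n) = O(n\log n)$. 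Dividing by $d$ yields the claimed $O(n\log n/d)$.

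The main point to be careful about is that the height bound $O(n/m)$ enforced on the trees is a property of the \emph{final} forest and affects only the \emph{time} complexity of the algorithm, not its instantaneous bandwidth. Because bandwidth is an instantaneous quantity, the number of rounds required to build the forest, and hence the dependence on the tuning parameter $m$, is irrelevant to this bound; no feature specific to $m$ enters the argument, and the bound coincides exactly with that of standard GHS. The only genuine obstacle is therefore the bookkeeping of the second paragraph, namely confirming that none of the control messages peculiar to the tunable variant (the completion broadcasts, and any cluster-ID information exchanged) carry a payload larger than $O(\log n)$; once that is established, the throughput bound follows immediately.
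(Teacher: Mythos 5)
Your proposal is correct and follows essentially the same route as the paper's proof: reduce Phase 1 to the GHS bandwidth analysis of Lemma \ref{prop:GHSub}, check that the stopping-criterion broadcast carries only a UID plus a constant-size flag and hence is $O(\log n)$ bits, and bound the instantaneous throughput by noting that all $n$ nodes may transmit simultaneously, giving $O(n\log n/d)$. Your additional remark that the $O(n/m)$ height bound affects only time complexity, not instantaneous bandwidth, is a correct clarification the paper leaves implicit, but it does not change the argument.
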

\begin{proof}
Phase 1 only differs from the GHS algorithm, whose bandwidth complexity is shown to be $O(n\log n/d)$ in Lemma \ref{prop:GHSub}, with respect to the stopping criterion. When a node stops, it informs its neighbors: the size of the message informing the neighbors is $O(\log n)$, since it only contains the node's UID and a constant-size message. All nodes may stop and inform their neighbors at once: thus the bandwidth complexity of Phase 1 is $O(n\log n /d)$.
\end{proof}

\begin{lemma}[Bandwidth complexity of Phase 2 of the tunable algorithm]
\label{lemma:hybrid2ub}
Assume that messages carry the sender UID. Assume also that the consensus function is hierarchically computable. Then the bandwidth complexity of Phase 2 of the tunable algorithm is $O(n\log n/d)$.
\end{lemma}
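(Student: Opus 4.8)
The plan is to show that every message sent during Phase 2 has size $O(\log n)$ and that at most $n$ such messages can be in transit simultaneously, so that the total instantaneous data volume is $O(n\log n)$ and the bandwidth is $O(n\log n/d)$. The argument closely parallels the proof of Lemma \ref{lemma:hybrid1ub} for Phase 1.

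First I would enumerate the messages exchanged during Phase 2. This phase performs a convergecast from the leaves toward the root in which each node reports the number of its descendants to its parent; nodes whose offspring count exceeds $\lfloor n/m\rfloor$ cut the edge to their parent and announce that they have become a new cluster root; finally, the original root may undo a single cut to restore a minimum tree size. Each such message carries the sender UID together with a payload that is either a descendant count --- an integer bounded by $n$ --- or a constant-size control flag signalling a cut or a rejoin. Since both the UID and an integer in $\{0,\dots,n\}$ can be stored in $O(\log n)$ bits, every Phase 2 message has size $O(\log n)$. Crucially, no initial value or consensus value (of size $b$) is transmitted during this phase, so the per-message size does \emph{not} depend on $b$.

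Next I would bound the number of simultaneous transmissions. Since the communication model permits all agents to transmit at the same instant (e.g., in a synchronous execution, which is a special case of the general asynchronous model), an adversary can arrange for all $n$ nodes to send at once. Hence at most $n$ messages, each of size $O(\log n)$, are transmitted simultaneously, for a total instantaneous volume of $O(n\log n)$. Dividing by the maximum transmission time $d$ yields a bandwidth complexity of $O(n\log n/d)$, as claimed.

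The main point to verify --- and essentially the only place the argument could go wrong --- is that the descendant counts genuinely fit in $O(\log n)$ bits rather than scaling with $b$. This is where Phase 2 differs from the later consensus-computing phases: it is a purely \emph{structural/counting} step that manipulates only tree topology and integer counts bounded by $n$, so no message ever needs to carry a state value of size $b$. Once this observation is established, the throughput bound follows exactly as in the proof of Phase 1.
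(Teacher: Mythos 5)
Your proposal is correct and follows essentially the same route as the paper's proof: bound each Phase~2 message by $O(\log n)$ bits (descendant counts are at most $n-1$, plus UIDs and cut/control information --- and crucially no payload of size $b$), note that each node sends at most one message at a time so at most $n$ messages are in flight simultaneously, and divide by $d$. The paper's version is terser (it describes the message content as a descendant count plus a cut ID consisting of two UIDs, rather than your count-or-flag enumeration), but the substance is identical.
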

\begin{proof}
The size of all messages exchanged in Phase 2 is $O(\log n)$: each message contains the number of agents that are descendants of the sender node and a cut ID containing two UIDs. Every node has at most $n-1$ descendants: thus, the number of descendants of a given node can be represented with $O(\log n)$ bits. Since every node sends at most one message at a time, the bandwidth complexity of Phase 2 is $O(n\log n/d)$.
\end{proof}

\begin{lemma}[Bandwidth complexity of Phase 3 of the tunable algorithm]
\label{lemma:hybrid3ub}
Assume that messages carry the sender UID. Assume also that the consensus function is hierarchically computable. Then the bandwidth complexity of Phase 3 of the algorithm is $O((nm\log n+b)/d)$.
\end{lemma}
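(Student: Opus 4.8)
The plan is to decompose Phase~3 into the three distinct communication activities described above and to bound, for each one, both the size of an individual message and the maximum number of messages that may be in flight at the same instant; the bandwidth contribution of each activity is then the product of these two quantities divided by $d$, and the overall bound follows by taking the worst case over instants and retaining the dominant terms.

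First I would treat the initial announcement step, in which every node broadcasts its Cluster ID to its neighbors upon entering Phase~3. Such a message carries only the sender UID and a Cluster ID, each representable in $O(\log n)$ bits, so its size is $O(\log n)$. Exactly as in Lemma~\ref{lemma:hybrid1ub}, all $n$ nodes may announce simultaneously, so this step contributes $O(n\log n/d)$. Next, and this is the dominant term, I would analyze the convergecast in which each node reports to its parent the set of clusters to which its branch is connected. The key observation is that, after Phases~1 and~2, the forest consists of $O(m)$ clusters, so the set of clusters a single branch can touch has cardinality $O(m)$; encoding it amounts to listing $O(m)$ cluster IDs of $O(\log n)$ bits each, for a message of size $O(m\log n)$. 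In the worst-case topology (for instance, clusters of depth one) up to $O(n)$ nodes may transmit their reports at the same instant, yielding a bandwidth contribution of $O(nm\log n/d)$.

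Finally I would account for the per-cluster consensus-function computation carried out via Algorithm~\ref{alg:slowconvergecast}. By Lemma~\ref{prop:ccfc}, the token-passing discipline guarantees that within each tree only one node transmits at a time, and each such message has size $O(\log n + b)$; this is where the $b$ term enters, the $b$-sized payload being the (partial) consensus value relayed along the tree. Summing the three contributions, the $\log n$ parts are all dominated by the convergecast estimate $O(nm\log n/d)$, while the consensus computation supplies the additive $b$ contribution, giving bandwidth complexity $O((nm\log n + b)/d)$ as claimed.

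The step I expect to be the main obstacle is the message-size bound in the convergecast. One must argue that a node need only transmit the \emph{distinct} clusters its branch is adjacent to---at most $O(m)$ of them---rather than enumerating every inter-cluster edge or every neighbor's Cluster ID, either of which could be as many as $\Omega(n)$ and would inflate the estimate; dually, one must exhibit a worst-case topology on which $\Omega(n)$ reports of size $\Theta(m\log n)$ genuinely coincide in time, so that the $O(nm\log n/d)$ figure is tight rather than a mere over-count. Care is also needed in bookkeeping the simultaneous senders across the $O(m)$ clusters so that the consensus-value messages are correctly attributed to the $b$ term and do not silently change the dominant order.
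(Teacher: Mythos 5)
Your proof matches the paper's argument essentially verbatim: the same three-way decomposition into the cluster-ID announcement (size $O(\log n)$, up to $n$ simultaneous senders, hence $O(n\log n/d)$), the convergecast of branch-to-cluster lists with messages of at most $m$ cluster IDs of $O(\log n)$ bits each (hence $O(nm\log n/d)$), and the per-cluster consensus computation via Algorithm~\ref{alg:slowconvergecast} contributing $O((\log n + b)/d)$ by Lemma~\ref{prop:ccfc}. The tightness discussion you add is unnecessary for an upper bound, and the concern you correctly flag at the end---that $m$ clusters running Algorithm~\ref{alg:slowconvergecast} simultaneously would naively give an $m(b+\log n)/d$ rather than $(b+\log n)/d$ contribution---is left equally unaddressed by the paper's own proof, so it is a shared looseness rather than a gap peculiar to your attempt.
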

\begin{proof}
Two types of messages are exchanged (potentially at the same time) in the cluster discovery routine in Phase 3. First, nodes announce their cluster ID: each message has size $\log n$ (since a cluster ID is the ID of the root node) and every node sends exactly one such message. Then, nodes send their parents a list of the clusters their branch is connected to. Every message contains up to $m$ cluster IDs, each of size $\log n$: thus, the size of each message is upper-bounded by $m\log n$. Every node sends at most one message at any given time: thus the bandwidth complexity of Phase 3 is $O(nm\log n/d)$.

In addition, each root computes the cluster's consensus function with Algorithm \ref{alg:slowconvergecast}: the bandwidth complexity of this operation is $O((\log n +b)/d)$, as shown in Lemma \ref{prop:ccfc}.

Thus, the overall byte complexity of Phase 3 of the algorithm is $O((nm\log n+b)/d)$.
\end{proof}

\begin{lemma}[Bandwidth complexity of Phase 4 of the tunable algorithm]
\label{lemma:hybrid4ub}
Assume that messages carry the sender UID. Assume also that the consensus function is hierarchically computable. Then the bandwidth complexity of Phase 3 of the algorithm is $O(nm(b+\log n)/d)$ and $O(m^3(b+\log n)/d)$ .
\end{lemma}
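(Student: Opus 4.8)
The plan is to treat Phase~4 as a \emph{flooding} computation on the ``cluster graph'' whose super-nodes are the $\Theta(m)$ trees produced in Phases~1--2 (each of size $\Theta(n/m)$ and height $O(n/m)$) and whose super-edges are the inter-cluster links discovered in Phase~3. The first step is to bound the size of a single message. A flooded message carries consensus values together with their originating cluster IDs; since there are only $m$ clusters, it can aggregate at most $m$ such pairs, so its size is $O(m(b+\log n))$ --- exactly the per-node flooding estimate of Lemma~\ref{prop:flooding} with $n$ replaced by $m$ and the per-item size taken to be $b+\log n$.

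Next I would establish the two stated bounds as \emph{independent} over-estimates of the peak simultaneous traffic, so that the effective bandwidth is their minimum (this is precisely what produces the smooth interpolation of Figure~\ref{fig:bounds}). For the bound $O(nm(b+\log n)/d)$ I count \emph{by nodes}: the network contains $n$ nodes, each emitting at most one broadcast of size $O(m(b+\log n))$ at any instant, so at most $n$ such messages are in flight simultaneously. For the bound $O(m^3(b+\log n)/d)$ I count \emph{by clusters}: the cluster graph has at most $\binom{m}{2}=O(m^2)$ super-edges, and a flooding step sends across each super-edge at most one message of size $O(m(b+\log n))$; since reaching all of a cluster's $O(m)$ neighbours requires a distinct boundary transmission per incident link, the $\Theta(m)$ clusters generate $O(m^2)$ simultaneous inter-cluster transmissions, each of size $O(m(b+\log n))$.

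The main obstacle is making the cluster-level count rigorous in the \emph{broadcast, asynchronous} model, because an inter-cluster message does not jump directly between roots: it must be relayed up and down the two cluster trees (each of height $O(n/m)$), so naively the relays could contribute $\Theta(n)$ transmitters rather than $O(m^2)$. I would resolve this by arguing that, at any instant, the number of \emph{distinct} messages routed through a single cluster is bounded by its number of incident inter-cluster links, i.e.\ $O(m)$: each such flow propagates one hop at a time along a tree path, so it activates only $O(1)$ transmitter within the cluster at a given time. Summing $O(m)$ active transmitters over $\Theta(m)$ clusters yields $O(m^2)$ simultaneous transmitters, each carrying $O(m(b+\log n))$ bits, and dividing by $d$ gives $O(m^3(b+\log n)/d)$. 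The intra-cluster consensus computation performed by each root via Algorithm~\ref{alg:slowconvergecast} has bandwidth only $O((\log n+b)/d)$ by Lemma~\ref{prop:ccfc} and is therefore absorbed into both estimates. Combining the node-count and cluster-count bounds gives a peak bandwidth of $O(\min\{nm,\,m^3\}(b+\log n)/d)$, which is exactly the pair of bounds claimed; I expect the delicate per-cluster ``$O(m)$ messages in transit'' argument to be the only nontrivial part, with the message-size estimate and the node-count bound following the template of Lemmas~\ref{prop:flooding} and~\ref{prop:GHSub}.
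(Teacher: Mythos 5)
Your proposal is correct and follows essentially the same route as the paper's proof: the same $O(m(b+\log n))$ message-size bound, the same node-count argument for $O(nm(b+\log n)/d)$, and the same cluster-pair count of $O(m^2)$ simultaneous inter-cluster exchanges, each with a single message in flight, for $O(m^3(b+\log n)/d)$, with the final dissemination via Algorithm~\ref{alg:slowconvergecast} absorbed by Lemma~\ref{prop:ccfc}. The per-cluster ``$O(m)$ flows, $O(1)$ active transmitter per flow'' argument you flag as the delicate step is just a more explicit rendering of the paper's observation that each root-to-root exchange is routed along the Phase~3 tables with no duplication, so only one message per exchange is transmitted at any instant.
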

\begin{proof}
We prove the two bounds separately. First, we note that that every message exchanged in Phase 4 contains a list of at most $m-1$ intermediate consensus values (one per cluster) and the ID of the relevant cluster: thus, the size of each message is $O(m(\log n + b))$. Since no node sends more than a message at a time, the first bound follows.

Second, we observe that in the cluster flooding algorithm every cluster forwards each new piece of information it receives from a (neighbor or non-neighbor) cluster to neighbor clusters exactly once. Thus, each cluster contacts each of the $O(m-1)$ neighbor clusters with at most $m(b+\log n)$ bits of information overall.

Each exchange between neighbor clusters may require multiple messages. However, it is easy to see that only one message per exchange is transmitted at a given time: the message is \emph{routed} from the root of a cluster to the root of its neighbor, thanks to the routing tables developed in Phase 3, with no duplication. Thus, even if all $m$ clusters send information about all $m-1$ other clusters to every neighbor at the same time, no more than $O(m^3(\log n + b)/d)$ bits are exchanged at any given time.

Once a cluster root has computed the overall consensus value, it informs all nodes in its cluster. This is done with Algorithm \ref{alg:slowconvergecast}, with bandwidth complexity $O((\log n + b)/d)$. The overall bandwidth complexity of Phase 4 is therefore also upper-bounded by $O(m^3(\log n + b)/d)$.
\end{proof}

%Still missing is the analysis of Phase F and Phase OF.

\begin{proposition}[Bandwidth complexity of the tunable algorithm]
The bandwidth complexity of the tunable algorithm is $O(m^3(b+\log n)/d)$ and $O(nm(b+\log n)/d)$.
\end{proposition}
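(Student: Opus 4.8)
The plan is to combine the per-phase bandwidth bounds established in Lemmas~\ref{lemma:hybrid1ub}--\ref{lemma:hybrid4ub} by taking their maximum. The key observation is that the phases of the algorithm are executed \emph{sequentially}: a node only enters Phase~2 once it and all its neighbors have completed Phase~1, and similarly for the later transitions. Consequently, at any given instant $t$, the messages in transit belong to a single phase (or, at worst, to the boundary between two consecutive phases as the front propagates through the network). Since bandwidth complexity is defined as a supremum over execution time $t$ of the instantaneous throughput $F(a,x,\alpha,G,t)$, the overall bandwidth complexity is bounded by the maximum of the four per-phase bandwidth complexities, rather than their sum.

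First I would record the four bounds: Phase~1 and Phase~2 each have bandwidth complexity $O(n\log n/d)$ (Lemmas~\ref{lemma:hybrid1ub} and~\ref{lemma:hybrid2ub}); Phase~3 has bandwidth complexity $O((nm\log n + b)/d)$ (Lemma~\ref{lemma:hybrid3ub}); and Phase~4 has the two bounds $O(nm(b+\log n)/d)$ and $O(m^3(b+\log n)/d)$ (Lemma~\ref{lemma:hybrid4ub}). Taking the maximum over all phases, both Phase~1/2 bounds and the Phase~3 bound are dominated: since $m\geq 1$, we have $n\log n/d \in O(nm(\log n + b)/d)$, and likewise $(nm\log n + b)/d \in O(nm(\log n + b)/d)$. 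Thus the Phase~4 bounds dominate, yielding the two claimed bounds $O(nm(b+\log n)/d)$ and $O(m^3(b+\log n)/d)$ for the whole algorithm.

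To make the ``maximum, not sum'' argument rigorous, I would note that even if one allows a thin boundary region in which two adjacent phases are simultaneously active, the instantaneous throughput is at most the sum of the two phase throughputs, which is still $O(\max)$ up to the constant factor $2$; this does not affect the asymptotic bound. I would also address the recovery phase (Phase~F) separately: its messages consist of descendant counts, cut IDs and routing-table updates of size $O(\log n)$, together with re-invocations of Algorithm~\ref{alg:slowconvergecast} of size $O(\log n + b)$, so its bandwidth complexity is dominated by the Phase~4 bounds as well.

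The main obstacle I anticipate is justifying that the phases do not overlap enough to force an additive combination of bandwidth costs. Because the network is asynchronous, different parts of the graph may be in different phases at the same wall-clock time; the careful point is that, along any single communication channel and within the range of any single node (which is what determines instantaneous collisions and hence $F$), the relevant messages come from a bounded number of phases. I would lean on the phase-transition rule (a node enters the next phase only after all neighbors have completed the current one) to argue that the number of distinct phases active within any one node's neighborhood at a fixed instant is $O(1)$, so the instantaneous overall message volume is at most a constant times the single worst-case phase volume, preserving the asymptotic maximum.
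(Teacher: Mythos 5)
Your route is the same as the paper's: the paper's entire proof of this proposition is the one-line remark that it ``follows immediately'' from Lemmas \ref{lemma:hybrid1ub}, \ref{lemma:hybrid2ub}, \ref{lemma:hybrid3ub} and \ref{lemma:hybrid4ub}, and you likewise combine the four per-phase bounds. One of your elaborations is unnecessary, though harmless: the whole ``max, not sum'' apparatus (phase fronts, phase-transition rule, bounded number of phases per neighborhood) is not needed, because there are only four phases --- a constant. Even if all phases were simultaneously active in different regions of the network, the instantaneous throughput is at most the sum of the four per-phase worst cases, and a sum of $O(1)$ terms is $O$ of their maximum. So asynchrony and phase overlap pose no obstacle at all, and nothing about neighbors completing phases needs to be invoked.

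The substantive issue is your step ``thus the Phase 4 bounds dominate, yielding the two claimed bounds.'' You verified domination only for the $O(nm(b+\log n)/d)$ bound, and indeed that is the only one of the two for which it holds. For the $O(m^3(b+\log n)/d)$ bound the step fails: when $m$ is small (say, constant), Phases 1 and 2 contribute $\Theta(n\log n/d)$ and Phase 3 contributes up to $\Theta(nm\log n/d)$ in the worst case, and $n\log n/d \notin O\bigl(m^3(b+\log n)/d\bigr)$ unless $b$ is large relative to $n\log n$. Taken literally, the lemmas yield $O\bigl((nm\log n + b + \min\{m^3,\,nm\}(b+\log n))/d\bigr)$, in which the $nm\log n/d$ term from Phases 1--3 cannot be absorbed into the $m^3$ branch. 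To be fair, this imprecision is inherited from the paper itself: the proposition's statement drops those terms and its ``immediate'' proof glosses over exactly this point (note that Figure \ref{fig:bounds} says the hybrid algorithm recovers the bandwidth performance of GHS with convergecast as $m\to 1$, i.e., $O((n\log n + b)/d)$ per Lemma \ref{prop:GHSub}, not the $O((b+\log n)/d)$ that the $m^3$ bound would read at $m=1$). Your treatment of the recovery phase is a reasonable addition the paper does not make, but it needs the same caveat. If you restate your conclusion as the maximum of all four lemma bounds rather than asserting Phase 4 dominance, your argument is complete --- and in fact more careful than the paper's own write-up.
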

\begin{proof}
The proof follows immediately from Lemmas \ref{lemma:hybrid1ub}, \ref{lemma:hybrid2ub}, \ref{lemma:hybrid3ub} and \ref{lemma:hybrid4ub}.
\end{proof}

\section{Conclusions}
\label{sec:conclusions}
In this paper we study the bandwidth complexity of the consensus problem on networks with omnidirectional communication, with particular attention to \emph{robotic} applications. We provide a novel  definition of bandwidth complexity which captures the bandwidth use of multi-agent systems with  modern Media Access Control mechanisms. 
This definition allows us to show that, even in network of moderate size, bandwidth use can be a limiting factor on the time performance and on the scalability of common consensus algorithms such as flooding.

We then prove a lower bound on the bandwidth complexity of the consensus problem that becomes tight for hierarchically computable consensus functions and we provide a matching bandwidth-optimal algorithm. 
Finally, we extend our previous results in \cite{FR-MP:13}, proving that the hybrid algorithm presented in the paper achieves intermediate bandwidth complexity between the lower bound and the bandwidth complexity of the time-optimal algorithm, according to a user-defined tuning parameter. The tradeoff between worst-case time performance, byte performance and bandwidth performance is shown in Figure \ref{fig:bounds}. The implication of this result is that the hybrid algorithm can be used to achieve \emph{mixed} performance metrics, trading time performance and robustness for byte complexity (representative of energy consumption for communication) and bandwidth complexity.

We conclude this paper with a discussion of the limitations of our analysis, which reflect in interesting directions for future research. First, our worst-case analysis provides lower bounds on bandwidth complexity for the class $\mathcal{G}$ of \emph{all} graphs with $n$ nodes: it is of interest to further refine our results to capture the effect of network topology (and in particular of the maximum node degree) on the fundamental limitations on bandwidth performance and on the performance of existing algorithms. Second, an average-case analysis over graphs and asynchronous executions drawn randomly from a representative probability distribution would provide significant insight into the effect of bandwidth complexity on real-world networks, where a (small) probability of collisions is acceptable in presence of a TCP mechanism. Third, while the class of hierarchically computable functions encompasses many relevant engineering problems, it is of interest to study the role of bandwidth complexity for consensus functions that are \emph{not} hierarchically computable. In particular, nonhierarchical algorithms such as average-based consensus may perform no worse than hierarchical algorithms such as GHS with convergecast when the consensus function does not benefit from hierarchical computation. Finally, accurate software and hardware simulations of the performance of the algorithms presented in this paper on wireless channels with different MAC mechanisms will provide further insight into the relevance of our bandwidth metric and into the relative benefits of the different approaches.
%May want to refine this.

\bibliographystyle{IEEEtran}
\bibliography{../../../bib/alias,../../../bib/main}

\end{document}